\newlength{\oldparindent}
\numberwithin{equation}{section}
\theoremstyle{plain}
\newtheorem{thm}{Theorem}[section]
\newtheorem{assump}{Assumption}[section]
\newtheorem{rmk}{Remark}[section]
\newtheorem{lem}{Lemma}[section]
\newtheorem{cor}{Corollary}[section]
\newtheorem{algo}{Algorithm}[section]
\newcommand{\CEsp}[3]{ \textbf{E}^{#1}_{#3} \left[  #2  \right] }
\newcommand{\norm}[1]{\left| #1 \right|}
\newcommand{\normf}[2]{\| #1 \|_{#2} }
\newcommand{\Done}[2]{  {\frac{ \partial  #1}{\partial #2}}  }
\begin{document}

\rhead{May 19, 2022}
\lhead{P.\ Oyono~Ngou \& C.\ Hyndman  }
\chead{\textit{\quad \quad Fourier interpolation method for FBSDEs}}

\title{A Fourier interpolation method for numerical solution of FBSDEs: Global convergence, stability, and higher order discretizations \footnote{A previous version of this paper was titled ``Global convergence and stability of a convolution method for numerical solution of BSDEs'' \textit{arXiv:1410.8595v1}}}

\author{
Polynice OYONO~NGOU\footnotemark[3] \ and \ Cody HYNDMAN\footnote{Corresponding Author: email: cody.hyndman@concordia.ca}\ \footnote{ 
Department of Mathematics and Statistics, 
Concordia University, 
1455 Boulevard de Maisonneuve Ouest,
Montr\'eal, Qu\'ebec,
Canada H3G 1M8.
} 
\ 
}

\date{May 19, 2022}

\maketitle

\abstract{
The convolution method for the numerical solution
of forward-backward stochastic differential equations (FBSDEs), introduced in
\cite{hyndmanoyonongou:2013}, uses a uniform
space grid. %
In this paper we utilize a tree-like spatial discretization that approximates the BSDE on the tree, so that no spatial interpolation procedure is necessary.  In addition to suppressing extrapolation error, leading to a globally convergent numerical solution for the FBSDE, we provide explicit convergence rates.  On this alternative grid the conditional expectations involved in
the time discretization of the BSDE are computed using Fourier analysis and
the fast Fourier transform (FFT) algorithm.
The method is then extended to higher-order time discretizations of FBSDEs. Numerical results
demonstrating convergence are presented using a commodity price model, incorporating seasonality, and forward prices.
}

\vspace{5mm}

\noindent
\textbf{Keywords:}
  Forward-backward stochastic differential equations; numerical solutions; fast Fourier transform

\vspace{5mm}
\noindent
\textbf{Mathematics Subject Classification (2000):}
Primary: 60H10, 65C30; Secondary: 60H30

\section{Introduction\label{sec:Intro}}

A variety of numerical methods for backward stochastic differential equations (BSDEs) and forward-backward 
stochastic differential equations (FBSDEs) have been developed recently. Different applications call for innovative techniques for
the efficient resolution of these systems. In finance and economics
BSDEs are used for option pricing and hedging \cite{elkaouri:1996}, reflected BSDEs are used for modeling
American options \cite{elkarouietal:1997}, and
quadratic BSDEs play an important role in continuous-time recursive
utility \cite{duffieepstein:1992} and other utility maximization problems.  

Following the establishment of the well-posedness of BSDEs by \cite{pardouxpeng:1992} the
first numerical procedures to emerge were partial differential equation (PDE) based methods such as the finite difference approach of \cite{douglas:1996}. The PDE method is mainly devoted to coupled problems as is the spectral method of \cite{mashenzhao:2008}.  More recent numerical methods based on machine learning \cite{beck, weinan20, weinan17, weinan19, han20} have been applied to the numerical solution of BSDEs which, given the connections between the theory of PDE and their represeantations as BSDEs, in turn provides solutions to high dimensional PDEs.
Spatial discretization methods were initiated by \cite{chevance:1997}
with a quantization approach to conditional expectations.  However, it is
only since \cite{zhang:2004} and  \cite{bouchardtouzi:2004}
that a sound time discretization of (decoupled) FBSDEs is available.
The quantization approach was then used in the multidimensional framework
of  \cite{ballypages:2003,ballypagesp:2005} and the coupled FBSDE case of  \cite{delaruem:2006}. The theoretical basis for a multinomial approach for BSDEs was  introduced by \cite{briand:2001} and \cite{maprottermt:2002} and followed in practice by \cite{pengxu:2011}.
Monte Carlo methods are the most prolific approach for numerical
solutions of (F)BSDEs. They include the backward scheme of \cite{zhang:2004},
the Malliavin approach of  \cite{bouchardtouzi:2004}
and  \cite{crisanmt:2010}, the least-square regression
approach of  \cite{gobet:2005} and the iterative schemes
 \cite{benderdenk:2007} and \cite{benderzhang:2008}.

Additional approachs to numerical solution of BSDEs include,
include the cubature method  \cite{crisanm:2012,crisanmol:2013}, Fourier-cosine expansions \cite{RuijOosl:2013,oosterlee:2016,Ruijter20161},  and the convolution method  \cite{hyndmanoyonongou:2013}.
In introducing the convolution method, \cite{hyndmanoyonongou:2013} developed a local discretization
error which includes an extrapolation error. The extrapolation error component is
exclusively produced by the fast Fourier transform (FFT) algorithm
and the underlying trigonometric interpolation used to compute conditional
expectations. To improve the performance of the convolution method it is desirable to eliminate the extrapolation
error, and improve the error bound, with an alternative implementation of the FFT
algorithm.

In this paper we propose an alternative space grid for the convolution method,
instead of the rectangular grid in \cite{hyndmanoyonongou:2013}, which eliminates the extrapolation error, leads to a globally convergent numerical solution for the (F)BSDE, and provides explicit convergence rates.  We also apply the numerical method to the Runge-Kutta schemes for FBSDEs proposed by  \cite{Chasscrisan:2013}.  The tree-like nature of the alternative grid avoids extrapolations
and leads to a global error bound for the BSDE approximate solutions. 
Further, the implementation of the convolution method originally presented in  \cite{hyndmanoyonongou:2013}  is simplified by using an alternative parametric transformation to enforce the necessary periodic boundary conditions.

The paper is organized as follows.  Section \ref{sec:AltD} reviews the explicit Euler time discretization of BSDEs, recalls the  convolution method of  \cite{hyndmanoyonongou:2013} for computing the necessary conditional expectations,  gives a description of an alternative spatial discretization, and provides a generic implementation of the convolution method on this grid using the discrete Fourier transform.  The section ends with a global error analysis.  Section \ref{sec:Ext} extends the Fourier interpolation method to higher order time discretizations of FBSDEs and includes the related global error analysis. Finally, Section \ref{sec:NumResults}
presents  a numerical implementation,  in the context of a commodity price model, that illustrate the theoretical results. Section \ref{sec:Conclusion} concludes.

\section{The Fourier interpolation method\label{sec:AltD}}

In this section, we introduce an alternative grid that remove extrapolation error of \cite{hyndmanoyonongou:2013}, after a quick presentation of the Euler scheme for FBSDEs. Section \ref{sec:ErrorA} presents a global error analysis of the Fourier interpolation method on this alternative grid and under the Euler scheme.

\subsection{Time Discretization}

Let $\left( \Omega , {\bf P}, \mathcal{F}, \{ \mathcal{F}_t\}_{t \in [0,T]} \right)$
be a complete filtered probability space generated by a $d-$dimensional
Wiener process $W$. We seek a numerical solution to the FBSDE 

\begin{equation} 
\begin{cases} dX_t = a(t,X_t)dt + \sigma(t,X_t)dW_t \\
-dY_t = f(t,X_t,Y_t,Z_t)dt - Z^*_tdW_t \\ 
X_0 = x_0 , \ Y_T = \xi  \end{cases} 
\label{eq:bsde} 
\end{equation}

The forward drift $a: [0,T] \times \mathbb{R}^d \rightarrow \mathbb{R}^d$,
the forward volatility $\sigma: [0,T]\times\mathbb{R}^d \rightarrow \mathbb{R}^{d \times d}$,
the driver $f:[0,T] \times \mathbb{R}^d\times\mathbb{R}\times \mathbb{R}^d \rightarrow \mathbb{R}$
are deterministic functions. The initial condition $x_{0}\in\mathbb{R}^{d}$
and the terminal condition take the Markovian form $\xi=g(X_{T})$
where $g :\mathbb{R}^d \rightarrow \mathbb{R}$. The FBSDE coefficients
satisfy Assumption \ref{assump:coef} so that existence and uniqueness
of the FBSDE solution $(X,Y,Z)$ is assured.

\begin{assump}\label{assump:coef} There exist positive constants $K_1$, $K_2$ $K_3$, and $K_4$ such that the coefficients of the FBSDE (\ref{eq:bsde}) satisfy
\begin{align} 
\norm{a(t,x_1) - a(t,x_2)} + \normf{\sigma(t,x_1) - \sigma(t,x_2)}{2}  & \leq   K_1 \norm{x_1 - x_2}\\
\norm{a(t,x)} + \normf{\sigma(t,x)}{2}  & \leq  K_2 \\
\norm{ f(t,x_1,y,z) - f(t,x_2,y,z)}  &\leq  K_1 \norm{x_1 - x_2} \\
\norm{ f(t,x,y_1,z_1) - f(t,x,y_2,z_2)}  &\leq  K_1 \left(\norm{y_1 - y_2} + \norm{z_1 - z_2} \right)\\
\norm{ f(t,x,y,z) }  &\leq  K_3 (1 + \norm{x} +\norm{y} + \norm{z})
\end{align} 
for any $t\in[0,T]$, $x,x_1,x_2 \in \mathbb{R}^d$, $y, y_1,y_2 \in \mathbb{R}$, $z,z_1,z_2 \in \mathbb{R}^d$. 

Moreover $\sigma^2 := \sigma \sigma^*$ is (uniformly) invertible, continuous and bounded
\begin{align}  \normf{ (\sigma^2(t,x))^{-1} }{2}  \leq K_4 \label{eq:ssK}
\end{align} 
for any $t\in[0,T]$, $x \in \mathbb{R}^d$.

In addition, the terminal value is square integrable \begin{equation} \normf{\xi}{L^2}^2 := \CEsp{}{ \norm{g(X_T)}^2 }{} < \infty. \end{equation}
\end{assump} 

\begin{rmk}
  Assumption~\ref{assump:coef} makes no explicit assumption on $g$ except square integrability.  More restrictive assumptions on $g$, namely that $g$ is twice continuously differentiable, shall be specified in the main results of Section 2 and 3 which provide explicit rates of convergence for the Fourier interpolation algorithms.  However, similar to \cite{crisanm:2012}, see also \cite{10.1214/EJP.v20-3022}, it should be possible to consider $g$ non-differentiable using a mollification argument on the terminal condition.\footnote{This approach was suggested by an anonymous referee of an earlier version of this paper.}  However, we shall not follow this approach in this paper so as to keep the focus on the main contributions which are the overall approach, implementation of the convolution method on the tree-like grid, developing the Runge-Kutta discretization schemes, and explicit convergence rates.
\end{rmk}

The time discretization of the FBSDE (\ref{eq:bsde}) on the time partition
$\pi=\{t_{0}=0<t_{1}<\ldots<t_{n}=T\}$ consists of the explicit
Euler scheme given by 
\begin{equation}
  \begin{cases} {X}^{\pi}_0 = x_0 \\ 
{X}^{\pi}_{t_{i+1}} =  {X}^{\pi}_{t_i} + a(t_i, {X}^{\pi}_{t_i}) \Delta_i + \sigma(t_i, {X}^{\pi}_{t_i}) \Delta W_i \\ 
{Z}^{\pi}_{t_n} = 0 , \ {Y}^{\pi}_{t_n} = \xi^{\pi} \\
{Z}^{\pi}_{t_i} = \frac{1}{\Delta_i} \CEsp{}{ {Y}^{\pi}_{t_{i+1}} \Delta W_i | \mathcal{F}_{t_i} }{} \\  {Y}^{\pi}_{t_i}=\CEsp{}{{Y}^{\pi}_{t_{i+1}}|\mathcal{F}_{t_i}}{}+f({t_i},{X}^{\pi}_{t_i},\CEsp{}{{Y}^{\pi}_{t_{i+1}}|\mathcal{F}_{t_i}}{},{Z}^{\pi}_{t_i})\Delta_i
  \end{cases} \label{eq:expdisc} 
\end{equation} 
with
$\Delta_{i}=t_{i+1}-t_{i}$ and $\Delta W_{i}=W_{t_{i+1}}-W_{t_{i}}$.
Under an additional Lipschitz condition on the function $g$ we have, from \cite{zhang:2004} and \cite{bouchardtouzi:2004},
that the quadratic discretization error
\begin{equation}
  E^2_{\pi} := \max_{0 \leq i <n} \CEsp{}{ \sup_{t \in [t_{i},t_{i+1}]} \norm{Y_t - {Y}^{\pi}_{t_i} }^2  }{} + \sum_{i=0}^{n-1} \CEsp{}{  \int_{t_i}^{t_{i+1}} \norm{Z_s - Z^{\pi}_{t_i}}^2 ds}{}
\end{equation}
is of first-order in time, i.e
\begin{equation} E^2_{\pi} = \mathcal{O}(\norm{\pi}) \label{eq:CEuler}
\end{equation}
where 
\begin{equation} \norm{ \pi } = \max_i \Delta_i. \end{equation}  Following \cite{hyndmanoyonongou:2013} and \cite{poly:PhD}, the approximate solution $u_{i}$
and the approximate gradient $\dot{u}_{i}$ at time node $t_{i}$,
$i=0,1,\ldots,n-1$, are given by
\begin{align} u_i(x)  & =   \tilde{u}_i(x) + \Delta_i f(t_i, x, \tilde{u}_i(x) , \sigma(t_{i},x)\dot{u}_i(x)) \label{eq:uf}  \\
\sigma(t_i, x)\dot{ u}_i (x) & =  \frac{1}{\Delta_i} \CEsp{ }{ Y^{\pi}_{t_{i+1}} \sigma(t_i, {X}^{\pi}_{t_i})\Delta {W}_i | {X}^{\pi}_{t_i} = x}{} \nonumber\\
& =  \frac{1}{\Delta_i} \int_{\mathbb{R}^d}^{} (y-\Delta_i a(t_i,x)) u_{i+1}(x + y) h_i(y|x) dy  \label{eq:guf1}
\end{align} 
where the intermediate solution $\tilde{u}_i$ is given by
\begin{align} \tilde{u}_i(x) &=  \CEsp{}{Y^{\pi}_{t_{i+1}} | {X}^{\pi}_{t_i} = x}{} \nonumber\\
&=  \int_{\mathbb{R}^d}^{} u_{i+1}(y) h_i(y|x) dy, \label{eq:iuf1} \end{align}
$u_{n}=g$ and $h_i$ is a Gaussian density \begin{equation} h_i(y|x) = ( 2\pi  )^{-\frac{d}{2}} \normf{{\Delta_i}\sigma^2(t_i,x) }{2}^{-\frac{1}{2}} \exp\left(-\frac{1}{2 \Delta_i }  {\bf y}^* (\sigma^2(t_i,x))^{-1} {\bf y} \right),
\end{equation}
and where ${\bf y}=y-\Delta_{i}a(t_{i},x)$ with characteristic function
\begin{equation}
  \phi_i(\nu,x) = \exp \left( {\bf i} \Delta_i \nu^* a(t_i,x)  - \frac{1}{2}\Delta_i \nu^*\sigma^2(t_i,x) \nu  \right)  .
\end{equation}

Let $\mathfrak{F}$ and $\mathfrak{F}^{-1}$ denote the Fourier transform
operator and the inverse Fourier transform operator respectively,
\begin{align}
  \mathfrak{F}[\theta](\nu) &= \int_{\mathbb{R}^d} e^{- {\bf i} x^* \nu } \theta(x) dx \\
  \mathfrak{F}^{-1}[\theta](x) &= \frac{1}{(2\pi)^d} \int_{\mathbb{R}^d} e^{ {\bf i} \nu^* x } \theta(\nu) d\nu.
\end{align}
Using the relationships between the characteristic function and the density function leads to the representation
\begin{align} %
\tilde{u}_i(x) &= \mathfrak{F}^{-1}[ \mathfrak{F}[u_{i+1}](\nu) \phi_{i}(\nu,x) ](x) \label{eq:guf2}\\
\dot{u}_i(x) &= \sigma^*(t_i,x) \mathfrak{F}^{-1}[ \mathfrak{F}[u_{i+1}](\nu) {\bf i} \nu \phi_{i}(\nu,x) ](x) \label{eq:iuf2}
\end{align}
for equations (\ref{eq:guf1}) and (\ref{eq:iuf1}) under integrability
condition on the approximate solution $u_{i+1}$. In the sequel, we restrict the analysis to the one-dimensional case with $d=1$. %

\subsection{Space discretization}

The space discretization is performed on a tree-like grid using three
parameters: the increment length $l>0$, the even number
$N\in\mathbb{N}^{*}$ of space steps on the increment length, and the
initial number $N_{0}\in\mathbb{N}$ of increment intervals. Hence,
the space step is constant and uniform on the grid \begin{equation} \Delta x = \frac{l}{N}.\end{equation}
At the time node $t_{i}$, $i=0,1,\ldots,n$, the space domain is restricted
on an interval of length $N_{i}l$ centred at $x_0$ and discretized
uniformly with $N_{i}N$ space steps where \begin{equation} N_i = N_0 + i \end{equation}
giving the nodes
\begin{equation} x_{ik} = x_0 -\frac{N_il}{2} + k \Delta x  , \quad k=0,1,\ldots,N_i N.
\end{equation}
In particular, the relation
\begin{equation}
  x_{ik} = x_{i+1,k+\frac{N}{2}} , \quad k=0,1,\ldots,N_iN. \label{eq:intergrid}
\end{equation}
holds since the restricted interval at each time node is obtained by evenly
increasing the previous one with an interval of length $l$. If $N_{0}=0$
then the space grid at mesh time $t_{0}$ is compose by the single
point \begin{equation} x_{00} = x_0.\end{equation} Figure \ref{fig:grid}
gives examples of alternative grids. 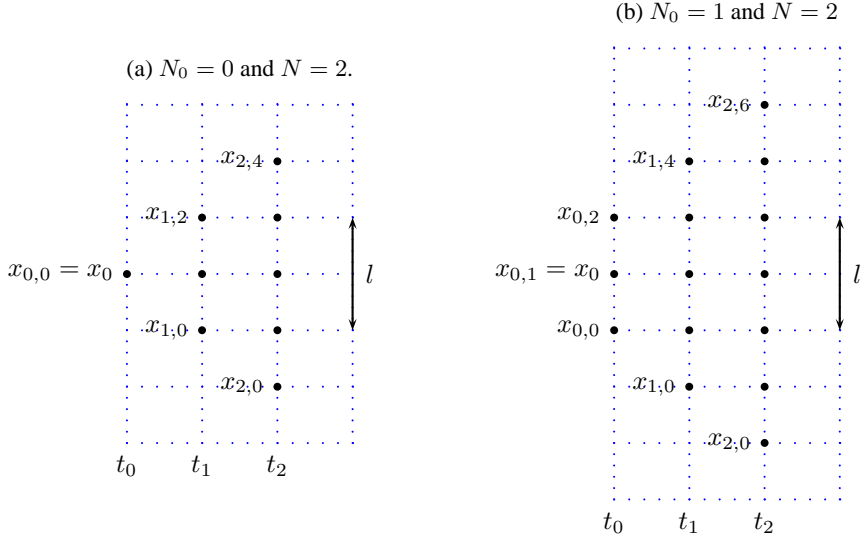
\begin{figure}

\centering
\caption{Examples of alternative grids}
\vspace{0.25cm}
\psset{showpoints = false, dotsize =3pt, griddots = 5, gridlabels = 0pt, xunit = 1, yunit = 0.75, gridcolor = blue, subgridcolor=lightgray}

\mbox{
\begin{subfigure}[c]{0.3 \textwidth} \centering \caption{$N_0=0$ and $N=2$.}
\begin{pspicture}(0,-3)(3,3) \psgrid[subgriddiv = 1, subgriddots = 5] 
\psdots(0,0)(1,0)(1,1)(1,-1)(2,0)(2,-2)(2,-1)(2,2)(2,1)
\uput[d](0,-3){$t_0$} 
\uput[d](1,-3){$t_1$} 
\uput[d](2,-3){$t_2$} 
\uput[r](3,0){$l$} 
\psline{<->}(3,-1)(3,1)
\uput[l](0,0){ $x_{ 0,0}=x_0$} 
\uput[l](1,-1){ $x_{1,0}$} 
\uput[l](1,1){ $x_{1,2}$} 
\uput[l](2,-2){ $x_{2,0}$} 
\uput[l](2,2){ $x_{2,4}$}
\end{pspicture}
\end{subfigure}

\hspace{1.5cm}

\begin{subfigure}[c]{0.3 \textwidth} \centering \caption{$N_0=1$ and $N=2$}
\begin{pspicture}(0,-4)(3,4) \psgrid[subgriddiv = 1, subgriddots = 5]
\psdots(0,0)(0,1)(0,-1)(1,0)(1,1)(1,-1)(1,2)(1,-2)(2,0)(2,-2)(2,-1)(2,2)(2,1)(2,-3)(2,3)
\uput[d](0,-4){$t_0$} 
\uput[d](1,-4){$t_1$} 
\uput[d](2,-4){$t_2$} 
\uput[r](3,0){$l$} 
\psline{<->}(3,-1)(3,1)
\uput[l](0,-1){ $x_{0,0}$} 
\uput[l](0,0){ $x_{0,1} = x_0$}
\uput[l](0, 1){ $x_{0,2}$} 
\uput[l](1,-2){ $x_{1,0}$} 
\uput[l](1,2){ $x_{1,4}$} 
\uput[l](2,-3){ $x_{2,0}$} 
\uput[l](2,3){ $x_{2,6}$}
\end{pspicture}
\end{subfigure}
}
\label{fig:grid}
\vspace{0.25cm}
\end{figure}

The convolution relations of equations (\ref{eq:guf2}) and (\ref{eq:iuf2})
call for a discretization of the Fourier space as well. At each mesh
time $t_{i}$, $i=1,2,\ldots,n$, the Fourier space is restricted on
an interval of length $L$ centred at zero $(0)$ and discretized
with $N_{i}N$ space steps. The equidistant nodes are thus of the
form \begin{equation} \nu_{ik} = -\frac{L}{2} + k \Delta \nu_i \text{, } k = 0,1,\ldots,N_iN \end{equation}
where $\Delta \nu_i=\frac{L}{N_iN}$. The {\it Nyquist~relation}\footnote{The minimum sampling rate to avoid aliasing.} holds whenever
$L$ is such that \begin{equation} Ll = 2\pi N.\end{equation}

\subsection{Implementation} \label{sec:implement}

In order to compute numerical approximations of equations (\ref{eq:guf2}) and (\ref{eq:iuf2})
at time node $t_{i}$, $i=0,1,\ldots,n-1$, we introduce the generic
functions $\theta_{i}:\mathbb{R}\rightarrow\mathbb{R}$, $\psi:\mathbb{R}^2\rightarrow\mathbb{C}$
and $\theta_{i+1}:\mathbb{R}\rightarrow\mathbb{R}$ such that \begin{equation} \theta_i (x) = \mathfrak{F}^{-1} \left[ \mathfrak{F}[\theta_{i+1}](\nu)  \psi( \nu, x )\right](x). \label{eq:Pint2}
\end{equation}We assume that the function $\theta_{i+1}$ satisfies the periodicity boundary
value equalities of Assumption \ref{assump:val}.
\begin{assump} \label{assump:val}
 The generic function $\theta_{i+1}$ satisfies \begin{align} \theta_{i+1}\left( x_{i+1,0}  \right) &= \theta_{i+1}\left( x_{i+1,N N_{i+1}}  \right) \label{eq:bound1} \\
\Done{ \theta_{i+1}}{x}\left( x_{i+1,0}  \right)  &=  \Done{\theta_{i+1}}{x}\left( x_{i+1,N N_{i+1}}  \right). \label{eq:bound2}
\end{align} \end{assump}
Hence, the Fourier integral 
\begin{equation} 
   \mathfrak{F}[\theta_{i+1}](\nu) = \int_{-\infty}^{\infty} e^{ -{\bf i}\nu x} \theta_{i+1}(x)dx 
\end{equation}
is restricted on the interval $[x_0 -\frac{N_{i+1}l}{2},x_0 + \frac{N_{i+1}l}{2}]= [x_{i+1,0},  x_{i+1,N N_{i+1}} ]$
and discretized using the grid points $\{x_{i+1,k}\}_{k=0}^{N_{i+1}N}$
with a quadrature rule with weights $\{w_{k}\}_{k=0}^{N_{i+1}N}$.
As to the inverse Fourier integral of equation (\ref{eq:Pint2}) we
restrict it on the interval $[-\frac{L}{2},\frac{L}{2}]$ and discretize
it with lower Riemann sums at the Fourier space grid point $\{\nu_{i+1,k}\}_{k=0}^{N_{i+1}N}$. 

Let $\mathcal{D}$ and $\mathcal{D}^{-1}$ denote the discrete Fourier
transform and the inverse discrete Fourier transform respectively
\begin{align} \mathfrak{D}[ \{x\}_{i=0}^{N-1} ]_k &= \frac{1}{N} \sum_{j=0}^{N-1} e^{-{\bf i} j k \frac{2\pi}{N} } x_j \\
\mathfrak{D}^{-1}[ \{x\}_{i=0}^{N-1} ]_k &= \sum_{j=0}^{N-1} e^{{\bf i} j k \frac{2\pi}{N} } {x}_j.\end{align} Then the discretization procedure leads to the approximation \begin{equation} \theta_i (x_{i+1,k})  \approx  (-1)^{k} \mathfrak{D}^{-1}\left[ \left\{ \psi(\nu_{i+1,j}, x_{i+1, k}) \mathbb{D}[ \theta_{i+1} ]_j     \right\}_{j=0}^{N_{i+1}N-1} \right]_{k} \label{eq:numconv2pp} \end{equation}
where \begin{equation} \mathbb{D}[ \theta_{i+1} ]_j = \mathfrak{D}\left[ \{ (-1)^s \tilde{w}_s \theta_{i+1} (x_{i+1,s})\}_{s=0}^{N_{i+1}N-1} \right]_j \end{equation}
and the weights $\{\tilde{w}\}_{k=0}^{N_{i+1}N-1}$ are given by \begin{equation} \tilde{w}_{k}=w_{k}+w_{N_{i+1}N}\delta_{k,0}. \end{equation}
where $\delta$ stands for the Kronecker delta. The relation of equation (\ref{eq:intergrid}) allows us to write 
\begin{equation}
  \theta_i (x_{ik})  \approx  (-1)^{k + \frac{N}{2}} \mathfrak{D}^{-1}\left[ \left\{ \psi(\nu_{i+1,j}, x_{i k}) \mathbb{D}[ \theta_{i+1} ]_j     \right\}_{j=0}^{N_{i+1}N-1} \right]_{k+\frac{N}{2}} \label{eq:numconv2p}
\end{equation}
for $k=0,1,\ldots,N_{i}N$.

In equation (\ref{eq:numconv2p}), the generic function $\psi$ depends
on the space node $x_{ik}$. If the relation generalizes for all space
nodes $x_{ik}$, $k=0,1,\ldots,N_{i}N$, the function values $\theta_{i}(x_{ik})$,
$k=0,1,\ldots,N_{i}N$, can not be computed with a single direct FFT
procedure. Instead, a separate FFT procedure using the values of the
generic function $\psi$ at $x_{ik}$ is needed to compute the function
value $\theta_{i}(x_{ik})$. Nonetheless, the vector-matrix representation
of the FFT procedure in equation (\ref{eq:numconv2p}) allows the computation
of all function values $\theta_{i}(x_{ik})$ with a matrix multiplication. In the vector-matrix representation, equation (\ref{eq:numconv2p})
is
\begin{equation*}
  \theta_i (x_{ik}) = (-1)^{k+\frac{N}{2}} \hat{F}_{k+\frac{N}{2}} \Psi(x_{ik})\mathbb{D}[ \theta_{i+1} ]
\end{equation*}
where
$\hat{F}_{k+\frac{N}{2}}$ is the $(k+\frac{N}{2})$th row of the
$N_{i+1}N$ dimension inverse FFT matrix $\hat{F}$ and $\Psi(x_{ik})$
is the $N_{i+1}N$ dimension diagonal matrix built with the values
$\{\psi(\nu_{i+1,j},x_{ik})\}_{j=0}^{N_{i+1}N-1}$. Let ${\Theta}^{(i)}$
be the $N_{i}N$ dimension vector of the function values $\theta_{i}(x_{ik})$
such that \begin{equation} {\Theta}^{(i)}_{1+k} =  \theta_{i}(x_{ik}) \end{equation}
for $k=0,1,\ldots,N_{i}N$. The matrix representation gives \begin{equation} {\Theta}^{(i)} = \hat{ \Psi }^{(i)} \mathbb{D}[ \theta_{i+1} ] \end{equation}
where $\hat{\Psi}^{(i)}$ is the $(N_{i}N+1)\times N_{i+1}N$ matrix
such that \begin{equation} \hat{ \Psi }^{(i)}_{1+k,1+j} = (-1)^{k + \frac{N}{2}}\bar{\omega}_i^{j(k+\frac{N}{2})} \psi(\nu_{i+1,j},x_{ik}) \end{equation}with
$\bar{\omega}_{i}=e^{{\bf i}2\pi(N_{i+1}N)^{-1}}$, $k=0,1,\ldots,N_{i}N$
and $j=0,1,\ldots,N_{i+1}N-1$.

The requirements of Assumption \ref{assump:val} can easily be satisfied.
Given a function $\eta:[a,b]\rightarrow\mathbb{R}$ and $\eta\in\mathcal{C}^{1}$,
if we consider the transformation \begin{align} \eta^{\alpha,\beta}(x) = \eta (x) + \alpha x^2 + \beta x \label{eq:trans2} \end{align}
then the parameters $\alpha$ and $\beta$ can be chosen such that
the transform function and its derivative have equal values at the boundaries
of any interval. The following lemma gives a method to select the
coefficients $\alpha$ and $\beta$ for the transform of equation
(\ref{eq:trans2}) such that Assumption~\ref{assump:val} holds in general.
\begin{lem}
\label{lem:paramvalue}Suppose the real function $\eta\in\mathcal{C}^{1}[a,b]$
is differentiable and let $\eta^{\alpha,\beta}$ be its transformed
function as defined in equation (\ref{eq:trans2}). Then \begin{align} \alpha &= \frac{\Done{\eta}{x}(a) - \Done{ \eta }{x}(b)}{2(b-a)} \text{ , } \label{eq:optalpha2} \\  
\beta &= \frac{{\eta}(a) - { \eta }(b)}{(b-a)} - \alpha (b+a)   \label{eq:optbeta2}
\end{align}solve the system of linear equations defined by \begin{equation} \begin{cases}  \eta^{\alpha,\beta}(a) = \eta^{\alpha,\beta}(b) \\
\Done{ \eta^{\alpha,\beta} }{x}(a)  = \Done{ \eta^{\alpha,\beta} }{x}(b). \end{cases} \label{eq:sysfft3} \end{equation}\end{lem}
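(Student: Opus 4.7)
The plan is to directly verify by substitution that the claimed expressions for $\alpha$ and $\beta$ solve the $2\times 2$ linear system in (\ref{eq:sysfft3}). First I would substitute the definition $\eta^{\alpha,\beta}(x) = \eta(x) + \alpha x^2 + \beta x$ into the two boundary equalities. The derivative equation yields $\Done{\eta}{x}(a) + 2\alpha a + \beta = \Done{\eta}{x}(b) + 2\alpha b + \beta$, in which the $\beta$ terms cancel, leaving a relation involving only $\alpha$. After a parallel substitution in the value equation, one is left with the system
\begin{align*}
\eta(a) - \eta(b) &= \alpha(b^2 - a^2) + \beta(b-a),\\
\Done{\eta}{x}(a) - \Done{\eta}{x}(b) &= 2\alpha(b-a).
\end{align*}

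Next I would exploit the fact that this system is triangular in $(\alpha,\beta)$: the second equation depends only on $\alpha$. Since $a<b$, the factor $2(b-a)$ is nonzero, so dividing immediately yields formula (\ref{eq:optalpha2}). Substituting this value of $\alpha$ into the first equation, factoring $b^2-a^2=(b-a)(b+a)$, and dividing through by $b-a$ then recovers exactly (\ref{eq:optbeta2}).

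There is no genuine obstacle here: the statement is essentially the closed-form solution of a lower-triangular $2\times 2$ linear system, whose only well-posedness requirement is the nondegeneracy $a \neq b$ that is already built into the hypothesis. The content of the lemma is conceptual rather than technical --- the quadratic-plus-linear correction $\alpha x^2 + \beta x$ carries exactly two free parameters, which is precisely the minimum needed to match the two conditions of Assumption \ref{assump:val}, and this correction preserves the $\mathcal{C}^1$ regularity of $\eta$ so that the transformed function remains admissible for the Fourier representation in (\ref{eq:Pint2}).
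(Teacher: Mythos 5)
Your proposal is correct and follows essentially the same route as the paper's own (very brief) proof: the derivative condition decouples to give $\alpha$ directly, and substituting into the value condition yields $\beta$. The algebra checks out, so nothing further is needed.
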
 
\begin{proof} The second equation of the system (\ref{eq:sysfft3}) gives equation (\ref{eq:optalpha2})
in a straightforward manner. Equation (\ref{eq:optbeta2}) is given
by the first equation of the system.
\end{proof}
Hence, the numerical discretization may be applied on the transformation
$u_{i+1}^{\alpha,\beta}$ at time node $t_{i}$ but a correction must
be performed so to recover the values of the intermediate solution
$\tilde{u}_{i}$ and the approximate gradient $\dot{u}_{i}$. The
next theorem gives the representation under the transform
of equation (\ref{eq:trans2}). 
\begin{thm}
\label{lem:altT}
Let $u^{\alpha, \beta}_{i+1}$ be the alternative
transform defined in equation (\ref{eq:trans2}) of the approximate
solution $u_{i+1}$. Then the intermediate solution $\tilde{u}_{i}$
and the approximate gradient $\dot{u}_{i}$ in equations (\ref{eq:guf2}) and (\ref{eq:iuf2})
satisfy
\begin{align} \tilde{u}_i(x) &= \mathfrak{F}^{-1}[ \mathfrak{F}[u^{\alpha,\beta}_{i+1}](\nu) \phi(\nu,x) ](x) -  \alpha [ (x + \Delta_i a(t_i,x))^2 + \Delta_i \sigma^2(t_i,x)] \nonumber \\ &
  - \beta (x + \Delta_i a(t_i,x))  \label{eq:isfbsde3} \\
\dot{ u}_i (x) &= \sigma(t_i,x) \mathfrak{F}^{-1}[ \mathfrak{F}[u^{\alpha,\beta}_{i+1}](\nu) {\bf i} \nu \phi(\nu,x) ](x) - \sigma(t_i,x) [ 2 \alpha(x + \Delta_i a(t_i,x))  + \beta] .\label{eq:agfbsde3} 
\end{align} \end{thm}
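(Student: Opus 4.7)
The plan is to exploit linearity. Write $u^{\alpha,\beta}_{i+1} = u_{i+1} + p$ with $p(y) := \alpha y^2 + \beta y$, and observe that both the intermediate-solution map $u_{i+1}\mapsto\tilde u_i$ and the gradient map $u_{i+1}\mapsto\dot u_i$ encoded in (\ref{eq:guf2})--(\ref{eq:iuf2}) are linear. Consequently the two Fourier expressions appearing on the right-hand sides of (\ref{eq:isfbsde3}) and (\ref{eq:agfbsde3}) equal $\tilde u_i(x)$, respectively $\sigma(t_i,x)\dot u_i(x)$, plus a polynomial correction obtained by applying the same transform to $p$. The problem reduces to verifying that this correction is exactly the explicit quadratic-linear term displayed in the statement.

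The \emph{main obstacle} is that $p$ is not in $L^1(\mathbb{R})$ nor $L^2(\mathbb{R})$, so the classical Fourier transform $\mathfrak{F}[p]$ does not exist and $p$ cannot be plugged directly into the Fourier representations. I would circumvent this by returning to the equivalent Gaussian integral forms (\ref{eq:iuf1}) and (\ref{eq:guf1}), which remain well-defined for polynomial integrands because the transition density $h_i(\cdot|x)$ has finite moments of every order. Linearity is then applied on the integral side, while the Fourier representations of (\ref{eq:guf2})--(\ref{eq:iuf2}) are retained only for the $u_{i+1}$ part.

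It remains to evaluate two elementary Gaussian moment integrals. For the intermediate solution, applying (\ref{eq:iuf1}) to $p$ yields the correction
\[
  \int_{\mathbb{R}} p(y)\, h_i(y|x)\, dy = \alpha\,\CEsp{}{(X^{\pi}_{t_{i+1}})^2}{X^{\pi}_{t_i}=x} + \beta\,\CEsp{}{X^{\pi}_{t_{i+1}}}{X^{\pi}_{t_i}=x},
\]
which, using the Gaussian mean $m := x + \Delta_i a(t_i,x)$ and variance $\Delta_i \sigma^2(t_i,x)$, reproduces exactly the bracketed term of (\ref{eq:isfbsde3}). For the gradient, applying (\ref{eq:guf1}) to $p$ and changing variables to $z=x+y$ produces $\frac{1}{\Delta_i}\int_{\mathbb{R}}(z-m)\,p(z)\,\tilde h_i(z|x)\,dz$, and the centred moment identities $\CEsp{}{(Y-m)Y}{} = \Delta_i \sigma^2(t_i,x)$ and $\CEsp{}{(Y-m)Y^2}{} = 2m\,\Delta_i \sigma^2(t_i,x)$ for $Y\sim N(m,\Delta_i\sigma^2(t_i,x))$ assemble into a correction of $\sigma^2(t_i,x)\,[2\alpha m + \beta]$. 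Dividing through by $\sigma(t_i,x)$ and rearranging gives (\ref{eq:agfbsde3}), closing the argument.
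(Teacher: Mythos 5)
Your proposal is correct, and the computations check out: with $Y\sim N(m,\Delta_i\sigma^2(t_i,x))$, $m=x+\Delta_i a(t_i,x)$, the identities $\CEsp{}{(Y-m)Y}{}=\Delta_i\sigma^2(t_i,x)$ and $\CEsp{}{(Y-m)Y^2}{}=2m\,\Delta_i\sigma^2(t_i,x)$ (the third central moment vanishing) give precisely the corrections $\alpha[m^2+\Delta_i\sigma^2(t_i,x)]+\beta m$ for $\tilde u_i$ and, after the division by $\sigma(t_i,x)$, $\sigma(t_i,x)[2\alpha m+\beta]$ for $\dot u_i$, with the right signs. The paper itself offers no details here — its proof is a one-line deferral to Theorem 3.1 of the earlier convolution-method paper, "using the transformation of equation (\ref{eq:trans2})" — so your argument is in effect a self-contained version of what that citation is meant to supply: linearity of the two maps plus an explicit evaluation of the polynomial correction. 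The one place where you add genuine value beyond the paper is in flagging that $p(y)=\alpha y^2+\beta y$ is not in $L^1$ or $L^2$, so the expression $\mathfrak{F}[u^{\alpha,\beta}_{i+1}]$ in the statement is only formal; retreating to the Gaussian integral representations (\ref{eq:guf1})--(\ref{eq:iuf1}) for the polynomial part is the right fix (the alternative being to read $\mathfrak{F}[p]$ as a distribution supported at $\nu=0$, whose pairing with $\phi_i(\nu,x)e^{{\bf i}\nu x}$ reproduces the same moments). Either way the theorem holds, and in the implementation the issue disappears anyway since all transforms are taken on the truncated grid.
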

\begin{proof}
The proof follows the steps of Theorem 3.1 of
\cite{hyndmanoyonongou:2013} using the transformation introduced
in equation (\ref{eq:trans2}).
\end{proof}

Algorithm \ref{alg:conv} details the numerical procedure on the space grid and produces numerical solutions $\{ {u}_{ik} \}_{k=0}^{N_iN}$,
$\{ {\tilde{u}}_{ik} \}_{k=0}^{N_iN}$ and $\{ {\dot{u}}_{ik} \}_{k=0}^{N_iN}$
for the approximate solution $u_{i}$, the intermediate solution $\tilde{u}_{i}$ and
the approximate gradient $\dot{u}_{i}$ respectively, $i=0,1,\ldots,n-1$.
We next consider error estimates under the alternative  discretization.

\begin{algo}  \label{alg:conv} Fourier Interpolation Method on Alternative Grid
\begin{enumerate}
\item Discretize the restricted real space $[x_0-\frac{N_nl}{2},x_0 + \frac{N_nl}{2}]$ and the restricted Fourier space $[-\frac{L}{2},\frac{L}{2}]$ with $N_nN$ space steps so to have the real space nodes $\{ x_{nk}\}_{k=0}^{N_nN}$ and $\{ \nu_{nk}\}_{k=0}^{N_nN}$
\item Value $u_n(x_{nk}) = g(x_{nk})$
\item For any $i$ from $n-1$ to $0$
\begin{enumerate}
\item Compute $\alpha$ and $\beta$ defining the transform of equation (\ref{eq:trans2}), such that
\begin{equation} \theta_{i+1}  = \ u_{i+1} ^{\alpha,\beta} \end{equation} and $\theta_{i+1}$ satisfies the boundary conditions of equations (\ref{eq:bound1}) and (\ref{eq:bound2}).
\item Compute $\theta_i(x_{ik})$ through equation (\ref{eq:numconv2p}) for $k=0,1,\ldots,N_iN$ with \begin{equation} \psi(\nu, x) = \phi_i(\nu, x) \end{equation} and retrieve the values $\tilde{u}_{ik}$ as 
  \begin{align}
    \tilde{u}_{ik} &= \theta_i(x_{ik}) - \alpha [ (x_{ik} + \Delta_i a(t_i,x_{ik}))^2 + \Delta_i \sigma^2(t_i,x_{ik})] \nonumber \\
    &- \beta (x_{ik} + \Delta_i a(t_i,x_{ik})).
  \end{align}
\item Compute $\theta_i(x_{ik})$ through equation (\ref{eq:numconv2p}) for $k=0,1,\ldots,N_iN$ with \begin{equation} \psi(\nu, x) = {\bf i}\nu\sigma(\nu, x)\phi_i(\nu, x) \end{equation} and retrieve the values $\dot{u}_{ik}$ as 
  \begin{equation}
    \dot{u}_{ik} = \theta_i(x_{ik}) - \sigma(t_i,x_{ik}) [ 2 \alpha(x_{ik} + \Delta_i a(t_i,x_{ik}))  + \beta].  \end{equation}
\item Compute the values $u_{ik}$ as 
\begin{equation} u_{ik} = \tilde{u}_{ik} + \Delta_i f(t_i,x_{ik}, \tilde{u}_{ik}, \dot{u}_{ik}) \end{equation} for $k=0,1,\ldots,N_iN$ through equation (\ref{eq:uf}).
\item Update the real space grid with equation (\ref{eq:intergrid}) and the Fourier space grid by discretizing the interval $[-\frac{L}{2},\frac{L}{2}]$ with $N_iN$ space steps so to have the real space nodes $\{ x_{ik}\}_{k=0}^{N_iN}$ and $\{ \nu_{ik}\}_{k=0}^{N_iN}$.
\end{enumerate}
\end{enumerate}
\end{algo}

\subsection{Spatial discretization error analysis\label{sec:ErrorA}}

Let $\{ \mathbf{u}_{ik} \}_{k=0}^{N_iN}$, $\{ \mathbf{\tilde{u}}_{ik} \}_{k=0}^{N_iN}$
and $\{ \mathbf{\dot{u}}_{ik} \}_{k=0}^{N_iN}$ denote the numerical
solutions obtained from the convolution method at time node $t_i$
given the solution $u_{i+1}$ at time $t_{i+1}$. For the Fourier interpolation
method on the alternative grid, we defined the local discretization
error as \begin{equation} { E}_{ik} := \norm{u_i(x_k) - \mathbf{u}_{ik}} +  \norm{\dot{u}_i(x_k) - \mathbf{\dot{u}}_{ik}} \end{equation}
for $i=0,1,\ldots,n-1$ and $k=0,1,\ldots,N_{i}N$. 
\begin{thm}
\label{thm:ADLerr}Suppose that the driver is $f\in\mathcal{C}^{1,2}([0,T]\times\mathbb{R}^{3})$,
the terminal condition is $g\in\mathcal{C}^{2}(\mathbb{R})$, and Assumption \ref{assump:coef}
is satisfied. Then
the convolution method yields a local discretization error of the
form \begin{equation} { E}_{ik} = \mathcal{O}\left(\Delta x \right) + \mathcal{O}\left( e^{-K \norm{\Delta_i}^{-1} l^2} \right) \label{eq:ADerrbound} \end{equation}for
some constant $K>0$ on the alternative grid and under the trapezoidal
quadrature rule with weights \begin{equation*} w_j = 1 - \frac{1}{2}(\delta_{j,0} + \delta_{j,N_{i+1}N}). \end{equation*} 
\begin{proof}
We suppose the solution $u_{i+1}$ at time $t_{i+1}$ is known. The solution $u_{i+1} \in \mathcal{C}^2$ is twice differentiable  since $f \in \mathcal{C}^{1,2}$ and $g \in \mathcal{C}^2$. Also, $u_{i+1}$ is square integrable with respect to the Gaussian density. 

By Theorem \ref{lem:altT}, we limit ourselves to the case where
\begin{align*}
  u_{i+1} \left(x_0 -\frac{N_{i+1}l}{2} \right) &= u_{i+1} \left(x_0+ \frac{N_{i+1}l}{2} \right) \\
 \Done{u_{i+1}}{x}\left(x_0 -\frac{N_{i+1}l}{2} \right) &= \Done{u_{i+1}}{x}\left( x_0 + \frac{N_{i+1}l}{2} \right)
\end{align*}
so that the coefficients of the transform are $\alpha =\beta =0$. Let $T_i $ be the Fourier polynomial interpolating $u_{i+1}$ on 
$\left[x_0 -\frac{N_{i+1}l}{2},x_0 + \frac{N_{i+1}l}{2}\right]$. Then
\begin{align} 
T_i (x) & :=  \sum_{k=-\frac{N_{i+1} N}{2}}^{\frac{N_{i+1}N}{2}-1} d_j e^{ {\bf i} k \frac{2 \pi }{N_{i+1}l} x }    \label{eq:fint1} \\
&= u_{i+1}(x) + \mathcal{O}(\Delta x), \quad \forall x \in \left[x_0 -\frac{N_{i+1}l}{2},x_0 + \frac{N_{i+1}l}{2}\right]\label{eq:fint2}
\end{align} where \begin{equation} (-1)^{j - \frac{N_{i+1}N}{2} }d_{j-\frac{N_{i+1}N}{2} } = \mathbb{D}[u_{i+1}]_{j} \text{, } j = 0,1,\ldots,N_{1+i}N-1 \label{eq:fint3} \end{equation} when using the trapezoidal quadrature rule. We have that 
\begin{align*} 
\tilde{u}_{i}(x_{ik}) &= \int_{ \norm{y} \leq \frac{l}{2} } u_{i+1}(x_{ik} + y) h_i(y|x_{ik})dy + \int_{ \norm{y} > \frac{l}{2} } u_{i+1}( x_{ik}+y) h_i(y|x_{ik})dy 
\end{align*} 
where 
\begin{align*} 
\int_{ \norm{y} > \frac{l}{2} } u_{i+1}( x_{ik}+y) h_i(y|x_{ik})dy &= \mathcal{O}\left(e^{-K l^2}\right)
\end{align*} 
for some constant $K>0$ which is inversely proportional to $\Delta_i$ by Cauchy-Schwarz and Chernoff inequalities since the solution $u_{i+1}$ is square integrable. 
Hence 
\begin{align*} 
&\tilde{u}_{i}(x_{ik})  =  \int_{ \norm{y} \leq \frac{l}{2} } u_{i+1}(x_{ik} + y) h_i(y|x_{ik})dy +  \mathcal{O}\left(e^{-Kl^2}\right) \\ 
& =  \int_{ \norm{y} \leq \frac{l}{2} } T_{i}(x_{ik} + y) h_i(y|x_{ik})dy + \mathcal{O}\left( \Delta x \right) +   \mathcal{O}\left(e^{-Kl^2}\right) \tag{by equation \ref{eq:fint2}}\\
& =  \int_{ \mathbb{R} } T_{i}(x_{ik} + y) h_i(y|x_{ik}) dy + \mathcal{O}\left( \Delta x \right) +   \mathcal{O}\left(e^{-Kl^2}\right) \tag{by Chernoff inequality, since $T_i$ is bounded} \\
& =  \int_{\mathbb{R}} \sum_{j = -\frac{N_{i+1}N}{2}}^{\frac{N_{i+1}N}{2}-1} d_{j} e^{{\bf i} j \frac{2\pi}{N_{i+1}l} (x_{i,k}+y)} h_i(y|x_{ik}) dy  +   \mathcal{O}(\Delta x) + \mathcal{O}\left(e^{-Kl^2}\right) \\
& =  \sum_{j = -\frac{N_{i+1}N}{2}}^{\frac{N_{i+1}N}{2}-1} d_{j} e^{{\bf i} j \frac{2\pi}{N_{i+1}l} x_{i,k}} \phi_i \left(j \frac{2\pi}{N_{i+1}l}, x_{ik}\right) +   \mathcal{O}(\Delta x) + \mathcal{O}\left(e^{-Kl^2}\right) \\
& =  (-1)^{k+\frac{N}{2}} \sum_{j=0}^{N_{i+1}N-1} \phi( \nu_{i+1,j}, x_{ik} ) \mathbb{D}[u_{i+1}]_j  e^{{\bf i} \frac{2\pi}{N_{i+1}N} j(k+\frac{N}{2})} +  \mathcal{O}(\Delta x) \\ &+ \mathcal{O}\left(e^{-Kl^2}\right)   \tag{by equation \ref{eq:fint3} when using the trapezoidal quadrature rule} \\ 
& =  \mathbf{\tilde{u}}_{ik} + \mathcal{O}(\Delta x) + \mathcal{O}\left(e^{-Kl^2}\right). 
\end{align*} 

Similar techniques show that 
\begin{equation} 
\dot{u}_{i}(x_k) = \mathbf{\dot{u}}_{ik} + \mathcal{O}\left(\Delta x\right) + \mathcal{O}\left( e^{-Kl^2}\right) 
\end{equation} 
where $K>0$ is inversely proportional to $\Delta_i$. 
The Lipschitz property of the driver $f$ completes the proof. \end{proof}
\end{thm}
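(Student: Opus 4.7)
The plan is to assume inductively that $u_{i+1}$ is known and sufficiently smooth (which follows from $f \in \mathcal{C}^{1,2}$ and $g \in \mathcal{C}^{2}$, giving $u_{i+1} \in \mathcal{C}^{2}$), and to bound the local error by separately controlling the errors in $\tilde{u}_i$ and $\dot{u}_i$; the Lipschitz property of $f$ in $(y,z)$ from Assumption \ref{assump:coef} then transfers the bound to $u_i$ via equation (\ref{eq:uf}). By Theorem \ref{lem:altT} it is enough to treat the case $\alpha = \beta = 0$, since the affine correction is computed exactly.

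For the intermediate solution $\tilde{u}_i(x_{ik})$, I would first split the Gaussian integral in equation (\ref{eq:iuf1}) into the window $\{|y|\leq l/2\}$ and its complement. The tail integral is controlled using Cauchy--Schwarz together with Chernoff's bound on the Gaussian density $h_i(\cdot|x_{ik})$, giving a contribution of order $\mathcal{O}(e^{-K l^2})$ where $K$ is inversely proportional to $\Delta_i$ (via the bound on $(\sigma^2)^{-1}$ in equation (\ref{eq:ssK})). On the window, I would replace $u_{i+1}$ by its trigonometric interpolant $T_i$ of equation (\ref{eq:fint1}); because $u_{i+1} \in \mathcal{C}^2$ and satisfies the periodic boundary conditions of Assumption \ref{assump:val}, standard Fourier interpolation theory yields the uniform bound $\|T_i - u_{i+1}\|_\infty = \mathcal{O}(\Delta x)$ on $[x_{i+1,0},x_{i+1,N N_{i+1}}]$. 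Once $T_i$ is substituted, I would re-extend the integral to all of $\mathbb{R}$ (the added tail is again $\mathcal{O}(e^{-Kl^2})$ since $T_i$ is bounded), and then exchange integral and finite sum so as to evaluate each Gaussian integral exactly using the characteristic function $\phi_i$. The resulting trigonometric sum, thanks to the identity (\ref{eq:fint3}) that identifies the Fourier coefficients of $T_i$ with the discrete Fourier transform coefficients $\mathbb{D}[u_{i+1}]_j$ under the trapezoidal rule, equals precisely $\mathbf{\tilde u}_{ik}$ up to the intergrid shift $k \mapsto k + N/2$ encoded in equation (\ref{eq:numconv2p}).

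For the approximate gradient $\dot{u}_i$, the same machinery applies, with two small modifications: the multiplier ${\bf i}\nu$ in equation (\ref{eq:iuf2}) corresponds at the Fourier-series level to differentiating the interpolant $T_i$ term by term, and the integrand in the tail bound now carries an extra factor $(y - \Delta_i a)$ which is absorbed by the Gaussian moments, still producing $\mathcal{O}(e^{-Kl^2})$. The left-multiplication by $\sigma(t_i,x_{ik})$ is harmless since $\sigma$ is bounded by Assumption \ref{assump:coef}.

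The main obstacle I anticipate is ensuring that the replacement of $u_{i+1}$ by its trigonometric interpolant is uniform on the window with the stated $\mathcal{O}(\Delta x)$ rate; this requires both the $\mathcal{C}^2$ regularity of $u_{i+1}$ and the periodicity of the transformed function (which is exactly why Theorem \ref{lem:altT} and the transformation of equation (\ref{eq:trans2}) were introduced), together with care that the constants in the interpolation bound do not degrade as the window $[x_0 - N_{i+1}l/2, x_0 + N_{i+1}l/2]$ grows with $i$. Matching the resulting trigonometric polynomial with the discrete Fourier transform output, through the sign-twist $(-1)^{k+N/2}$ and the shifted index $k + N/2$, is bookkeeping but needs to be done carefully so that the exponential tail constant $K$ depends only on the bound $K_4$ from (\ref{eq:ssK}) and not on the spatial position $x_{ik}$.
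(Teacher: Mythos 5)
Your proposal follows essentially the same route as the paper's proof: reduction to the $\alpha=\beta=0$ case via Theorem \ref{lem:altT}, splitting the Gaussian integral into the window $\{|y|\le l/2\}$ and its tail (controlled by Cauchy--Schwarz and Chernoff to get the $\mathcal{O}(e^{-Kl^2})$ term with $K$ inversely proportional to $\Delta_i$), substitution of the trigonometric interpolant $T_i$ with uniform $\mathcal{O}(\Delta x)$ error, re-extension to $\mathbb{R}$, exact evaluation via the characteristic function $\phi_i$, identification with the DFT output through equation (\ref{eq:fint3}) and the $k+\tfrac{N}{2}$ shift, and finally the Lipschitz property of $f$ to transfer the bound to $u_i$. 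The argument and the order of the steps match the paper's proof, including the treatment of the gradient term.
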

As expected, the alternative discretization improves the local error
bound by eliminating extrapolation errors in \cite{hyndmanoyonongou:2013}. The result of Theorem \ref{thm:ADLerr}
establishes the consistency of the convolution method with respect
to the approximate functions $u_{i}$ and gradients $\dot{u}_{i}$.
Furthermore, the absence of extrapolation errors in the local discretization
allows us to develop a tighter bound for the global discretization error. The
following corollary proves helpful when deriving the global discretization
error bound.

\begin{cor}
\label{cor:LDerr}Under the conditions of Theorem \ref{thm:ADLerr}, there is $C>0$ such that
\begin{equation} \sup_{i,k} { E}_{i,k} = \mathcal{O}(\Delta x ) + \mathcal{O}\left(e^{-C \norm{\pi}^{-1} l^{2}}\right).\end{equation}
\end{cor}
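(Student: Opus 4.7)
\textbf{Proof proposal for Corollary \ref{cor:LDerr}.}

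The plan is to lift the pointwise local estimate of Theorem \ref{thm:ADLerr} to a uniform bound in $(i,k)$ by exploiting two facts: (i) the implicit constants in the $\mathcal{O}(\Delta x)$ term can be taken independent of $i$ and $k$, and (ii) the constant $K$ in the exponential, while the proof of Theorem \ref{thm:ADLerr} only tracks its sign, depends on $\Delta_i$ in a way that can be bounded using the mesh size $\|\pi\|$.

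First I would inspect the proof of Theorem \ref{thm:ADLerr} to verify that the $\mathcal{O}(\Delta x)$ constant is uniform in $(i,k)$. The constants come from three places: the Fourier interpolation error $\|T_i - u_{i+1}\|_\infty = \mathcal{O}(\Delta x)$, the bound on $u_{i+1}$ and its derivatives that enters the quadrature error, and the Lipschitz constant of the driver $f$. Under Assumption \ref{assump:coef} together with $f \in \mathcal{C}^{1,2}$ and $g \in \mathcal{C}^2$, one obtains uniform bounds on $u_{i+1}$ and its first two spatial derivatives (e.g. via the standard a priori estimates for the associated quasilinear PDE, or directly from the Feynman--Kac representation of the Euler scheme). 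Hence there is a single constant $C_1>0$, independent of $i$ and $k$, for which the $\Delta x$ contribution is dominated by $C_1 \Delta x$.

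Next I would handle the exponential term. The proof of Theorem \ref{thm:ADLerr} traces $K$ to the Chernoff/Cauchy--Schwartz estimate of the Gaussian tail $\int_{|y|>l/2} h_i(y|x)\,dy$, which is of order $\exp\bigl(-c\, l^2 / \Delta_i\bigr)$ with $c>0$ determined by the uniform upper bound on $\sigma^2(t,x)$ from Assumption \ref{assump:coef}; in particular $c$ is independent of $i$, $k$, and of $\Delta_i$ itself. Since $\Delta_i \leq \|\pi\|$ for every $i$, we have $\Delta_i^{-1} \geq \|\pi\|^{-1}$, so
\begin{equation*}
\exp\!\bigl(-c\,\Delta_i^{-1} l^2\bigr) \;\leq\; \exp\!\bigl(-c\,\|\pi\|^{-1} l^2\bigr),
\end{equation*}
and the bound is uniform in $i$. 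Setting $C = c$ (or absorbing constants appropriately) yields the desired $\mathcal{O}(e^{-C\|\pi\|^{-1} l^2})$ term.

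Combining the two uniform bounds and taking the supremum over the finite index set $\{(i,k): 0\le i<n,\ 0\le k\le N_iN\}$ gives
\begin{equation*}
\sup_{i,k} E_{i,k} \;\leq\; C_1 \Delta x + C_2\, e^{-C\|\pi\|^{-1} l^2},
\end{equation*}
which is the statement. The main conceptual obstacle is the verification that the constants coming from the Fourier interpolation and quadrature errors in Theorem \ref{thm:ADLerr} truly do not degenerate as $i$ ranges over the time grid; this is where one must rely on the uniform regularity of the approximate solutions $u_{i+1}$, an estimate that is standard under Assumption \ref{assump:coef} and the stated smoothness of $f$ and $g$.
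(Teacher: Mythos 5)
Your proposal is correct and coincides with the argument the paper intends: the corollary is stated without proof as an immediate consequence of Theorem \ref{thm:ADLerr}, obtained exactly as you do by noting that the exponent in the local bound involves $\Delta_i^{-1}\geq\norm{\pi}^{-1}$ (so $e^{-K\Delta_i^{-1}l^2}\leq e^{-K\norm{\pi}^{-1}l^2}$) and then taking the supremum over the finite index set. Your additional care in checking that the implicit constants in the $\mathcal{O}(\Delta x)$ term and the Gaussian tail constant are uniform in $i$ and $k$ is a detail the paper leaves implicit, but it is the right thing to verify and your justification via the uniform bounds of Assumption \ref{assump:coef} and the regularity of $f$ and $g$ is sound.
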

We define the global error as \begin{equation} E_{l,\Delta x} := \sup_{i,k} e_{ik} + \sup_{i,k}\dot{e}_{ik} \label{eq:globdiser} \end{equation}
where \begin{align} e_{ik} &= \norm{u_{n-i}(x_k) - {u}_{n-i,k}} \\
\dot{e}_{ik} &= \norm{\dot{u}_{n-i}(x_k) - {\dot{u}}_{n-i,k}} \end{align}for $i=1,\ldots,n$ with $e_{0,k}=\dot{e}_{0,k}=0$. The next theorem
describes the stability and convergence properties of the convolution
method.
\begin{thm}
\label{thm:GDR}Suppose the conditions of Theorem \ref{thm:ADLerr}
are satisfied. If the space discretization is such that \begin{equation} \sup_{ i }  \max\left(\frac{K_4^{\frac{1}{2}}\Delta x}{\sqrt{2\pi  \Delta_i}} , \frac{K_4 \Delta x}{ \pi \Delta_i} \right)  \leq 1    \label{eq:ccond-1} \end{equation}then
the Fourier interpolation method is stable and the global discretization
error $E_{l,\Delta x}$ satisfies \begin{equation} E_{l,\Delta x} = \mathcal{O}(\Delta x ) + \mathcal{O}\left(e^{-C \norm{\pi}^{-1} l^{2}}\right) \label{eq:GDR-1} \end{equation}
where $C>0$ and $K_{4}$ is the upper bound of equation (\ref{eq:ssK}).\end{thm}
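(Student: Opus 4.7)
The plan is the classical stability-plus-consistency route. Consistency is already in hand through Corollary \ref{cor:LDerr}, so everything reduces to controlling the propagation of error by the one-step numerical map together with a discrete Gronwall argument. First I would introduce, for each $i=n-1,\ldots,0$, the one-step operator $\mathbf{T}_i$ that takes an input vector of values on the grid $\{x_{i+1,k}\}_{k=0}^{N_{i+1}N}$ and returns the output of one pass of Algorithm \ref{alg:conv} on the grid $\{x_{ik}\}_{k=0}^{N_i N}$. Adding and subtracting $\mathbf{T}_i$ applied to the exact values $u_{i+1}(x_{i+1,k})$ gives the decomposition
\begin{equation*}
\sup_k \norm{u_i(x_{ik}) - \mathbf{u}_{ik}} \leq \sup_k \norm{u_i(x_{ik}) - \mathbf{T}_i[u_{i+1}(\cdot)]_k} + \normf{\mathbf{T}_i[u_{i+1}(\cdot)] - \mathbf{T}_i[u_{i+1,\cdot}]}{\infty},
\end{equation*}
in which the first term is the local discretization error bounded by Corollary \ref{cor:LDerr} and the second is a propagated error from the previous time step. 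An analogous decomposition applies to $\dot{e}_{ik}$.

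Next I would prove stability of $\mathbf{T}_i$ using the matrix form $\Theta^{(i)} = \hat{\Psi}^{(i)} \mathbb{D}[\theta_{i+1}]$ together with the outer driver correction (\ref{eq:uf}). The symbol $\phi_i(\nu,x)$ satisfies $\norm{\phi_i(\nu,x)} = \exp\left(-\tfrac12 \Delta_i \nu^* \sigma^2(t_i,x) \nu\right)$, so on the Fourier grid its sampled modulus behaves like a discrete Gaussian with variance of order $\Delta_i$. Combined with the Nyquist relation $Ll=2\pi N$, the first inequality in (\ref{eq:ccond-1}) ensures the Fourier spacing $\Delta\nu_{i+1}$ is fine enough that the discrete mass of the sampled Gaussian remains bounded by its continuous $L^1$-mass (no aliasing amplification), giving an $\ell^\infty$-bound on the composed DFT--$\Psi$--IDFT map for the $\tilde u_i$ branch. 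For the $\dot u_i$ branch the symbol carries the additional unbounded factor ${\bf i}\nu\sigma(t_i,x)$; the second moment of the Gaussian scales as $1/\Delta_i$, and the second inequality in (\ref{eq:ccond-1}) is precisely what keeps the corresponding discrete moment sum bounded. Combining these with the Lipschitz estimates in Assumption \ref{assump:coef} applied through the driver correction yields $\normf{\mathbf{T}_i[v] - \mathbf{T}_i[w]}{\infty} \leq (1 + C\Delta_i)\normf{v-w}{\infty}$, where $C$ depends only on $K_1$, $K_2$ and $K_4$.

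Finally, iterating the one-step bound from $i=n-1$ down to $0$ and using $\prod_i (1+C\Delta_i) \leq e^{CT}$ yields the discrete Gronwall estimate $E_{l,\Delta x} \leq e^{CT} \sup_{i,k} E_{ik}$, into which the uniform local bound from Corollary \ref{cor:LDerr} can be inserted to recover (\ref{eq:GDR-1}). The main obstacle is the stability estimate for the $\dot u$ branch: the unbounded factor $\nu$ in the symbol of (\ref{eq:iuf2}) makes naive row-sum bounds on $\hat{\Psi}^{(i)}$ divergent, and a careful discrete Plancherel-type argument, uniform in the spatial parameter $x$ appearing in $\phi_i(\nu,x)$, is needed to exploit the second condition in (\ref{eq:ccond-1}) and close the estimate.
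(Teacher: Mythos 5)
Your proposal is correct and follows essentially the same route as the paper: the decomposition into local error plus propagated error, an $\ell^\infty$ stability bound for the one-step DFT--symbol--IDFT map obtained by comparing the discrete row sums of $\hat{\Psi}^{(i)}$ with the continuous integrals $\int\norm{\phi_i(\nu,x)}d\nu$ and $\int\norm{\nu\,\phi_i(\nu,x)}d\nu$ (whose Gaussian evaluations are exactly the two terms of condition (\ref{eq:ccond-1})), followed by a discrete Gronwall iteration. The only remark worth making is that the ``obstacle'' you flag for the $\dot u$ branch is not one: the paper closes that estimate with precisely the naive row-sum bound, since the Gaussian decay of $\phi_i$ makes $\int\norm{\nu\,\phi_i(\nu,x)}d\nu \leq 2K_4/\Delta_i$ finite, so no Plancherel-type argument is needed.
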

\begin{proof}
Let's first notice that \begin{align}
e_{ik} & \leq { E}_{n-i,k} + \norm{ \mathbf{{u}}_{n-i,k} - {u}_{n-i,k}}  \nonumber\\
& \leq E_{n-i,k} + (1+\Delta_iK) \norm{ \mathbf{\tilde{u}}_{n-i,k} - \tilde{u}_{n-i,k}} + \hspace{1mm}  \Delta_i K \norm{ \mathbf{\dot{u}}_{n-i,k} - {\dot{u}}_{n-i,k}} \label{eq:2prf1}
\end{align}where $K>0$ is the Lipschitz constant of the driver $f$. Also, we
have that \begin{equation} \dot{e}_{ik} \leq {E}_{n-i,k} + \norm{ \mathbf{\dot{u}}_{n-i,k} - {\dot{u}}_{n-i,k}}. \label{eq:2prf12} \end{equation}Furthermore,
the construction of the Fourier interpolation method gives \begin{align} 
\norm{ \mathbf{\tilde{u}}_{i,k} - \tilde{u}_{i,k}} & \leq \norm{ \mathfrak{D}^{-1}\left[ \left\{ \phi(\nu_{i+1,j}, x_{ik}) \mathbb{D}[ u_{i+1} - u_{i+1,s} ]_j     \right\}_{j=0}^{N_{i+1}N-1} \right]_{k+\frac{N}{2}} }  \nonumber\\
& \leq \frac{1}{N_{i+1}N} \left( \sum_{j=0}^{N_{i+1}N-1} \norm{\phi(\nu_{i+1,j},x_{ik})} \right) \sup_k \norm{ u_{i+1}(x_{ik}) - u_{i+1,k} } 
  \tag{using the matrix-vector representation of DFTs}\nonumber\\
& \leq \frac{1}{N_{i+1}N} \left( \sum_{j=0}^{N_{i+1}N-1}\norm{ \phi(\nu_{i+1,j} , x_{ik})} \right)  \sup_k e_{n-i-1,k}  \nonumber\\ 
& \leq \frac{ (\Delta \nu_{i+1})^{-1} }{N_{i+1}N} \left( \int_{\mathbb{R}} \norm{ \phi(\nu, x_{ik}) } d\nu \right) \sup_k e_{n-i-1,k} \nonumber\\
& \leq \frac{ K_4^{\frac{1}{2}}\Delta x }{(2\pi \Delta_i)^{\frac{1}{2}} } \sup_k e_{n-i-1,k}. \label{eq:2prf2}
\end{align}
where the last inequality holds by Assumption \ref{assump:coef}.
Similarly, \begin{align} 
\norm{ \mathbf{\dot{u}}_{i,k} - \dot{u}_{i,k}} & \leq  \norm{ \mathfrak{D}^{-1}\left[ \left\{ \psi(\nu_{i+1,j}, x_{ik}) \mathbb{D}[ u_{i+1} - u_{i+1,s} ]_j     \right\}_{j=0}^{N_{i+1}N-1} \right]_{k+\frac{N}{2}} } \nonumber\\
& \leq \frac{1}{N_{i+1}N} \left( \sum_{j=0}^{N_{i+1}N-1} \norm{ {\bf i}\nu_{i+1,j} \phi(\nu_{i+1,j}, x_{ik}) } \right) \sup_k e_{n-i-1,k} 
  \tag{using the matrix-vector representation of DFTs}\nonumber\\
& \leq \frac{(\Delta \nu_{i+1})^{-1} }{N_{i+1}N} \left( \int_{\mathbb{R}} \norm{\nu \phi(\nu,x_{ik} ) }d \nu \right) \sup_k e_{n-i-1,k} \nonumber\\
& \leq \frac{ K_4 \Delta x}{\pi \Delta_i} \sup_k e_{n-i-1,k}. \label{eq:2prf3}
\end{align}

The inequalities of equations (\ref{eq:2prf1}), (\ref{eq:2prf2}) and (\ref{eq:2prf3})
lead to \begin{align} 
e_{i,k} & \leq C_0 { E}_{i,k} + \left({1+ 2\Delta_iK} \right) \max\left(\frac{K_4^{\frac{1}{2}} \Delta x}{\sqrt{2\pi \Delta_i}} , \frac{K_4\Delta x}{ \pi \Delta_i} \right)  \sup_k e_{i-1,k} \nonumber\\
& \leq C_0 \sup_{i,k}{ E}_{i,k} + \left({1+ 2\Delta_i K} \right)\max\left( \frac{K_4^{\frac{1}{2}} \Delta x}{\sqrt{2\pi \Delta_i}} , \frac{K_4 \Delta x}{ \pi \Delta_i} \right) \sup_k e_{i-1,k} \nonumber
\end{align}where $C_0>0$ and $K>0$ is the Lipschitz constant of the driver
$f$. Consequently, \begin{align} \sup_k e_{i,k} & \leq  C_0 \sup_{i,k}{ E}_{i,k} + \left({1+ 2\Delta_iK} \right) \max\left(\frac{K_4^{\frac{1}{2}} \Delta x}{\sqrt{2\pi \Delta_i}} , \frac{K_4 \Delta x}{ \pi \Delta_i} \right)  \sup_k e_{i-1,k} \nonumber\\
& \leq  C_0 \sup_{i,k}{ E}_{i,k} + ({1+ 2\Delta_i K}) \sup_k e_{i-1,k}
\label{eq:2convprf} \end{align} since \begin{equation*} \sup_i \max\left( \frac{K_4^{\frac{1}{2}} \Delta x}{\sqrt{2\pi \Delta_i}} , \frac{K_4 \Delta x}{ \pi \Delta_i} \right) \leq 1. \end{equation*}and
the Gronwall's Lemma yields \begin{equation} \sup_k e_{i,k} \leq C_0 e^{2T K} \sup_{i,k} { E}_{i,k} \label{eq:2prf4} \end{equation}
from the inequality of equation (\ref{eq:2convprf}) for $i=0,1,\ldots,n$
knowing that $e_{0,k}=0$. The last equation establishes the stability
of the Fourier interpolation method for the approximate solution $u_{i}$
since its error at any time step is absolutely bounded.

The inequalities of equations (\ref{eq:2prf12}), (\ref{eq:2prf3}) and (\ref{eq:2prf4})
lead to \begin{align} \sup_k \dot{e}_{i,k} & \leq \left( C_1 +  \frac{\Delta x}{\pi \Delta_i}  {C_0 e^{2T K}} \right) \sup_{i,k}E_{i,k} \nonumber\\
& \leq \left( C_1 +  {C_0 e^{2T K}} \right) \sup_{i,k}E_{i,k} \label{eq:2prf5}
\end{align}for a positive constant $C_{1}>0$. Hence, the convolution method
is also stable for the approximate gradient $\dot{u}_{i}$. The result of equation (\ref{eq:GDR-1}) follows by taking the supremum
on the left hand sides of equations (\ref{eq:2prf4}) and (\ref{eq:2prf5})
other time steps and applying Corollary \ref{cor:LDerr}.

\end{proof}
As for most explicit methods for PDEs, the convolution method requires
a stability condition as described in equation (\ref{eq:ccond-1}). In
general, Theorem \ref{thm:GDR} shows that the convolution method
converges when the space discretization is relatively as fine as the
time discretization. Other numerical methods for BSDEs, and particularly
Monte Carlo based methods, have a stability and convergence condition.
Indeed, error explosion occurs for fine time discretizations in the
backward methods of  \cite{gobet:2005} and \cite{bouchardtouzi:2004}. In order to maintain stability and
convergence, the space discretization has to be refined by increasing
the number of simulated paths.

\section{Higher order time discretization for FBSDEs \label{sec:Ext}}

In this section, we discuss further extensions of the Fourier interpolation
method on the alternative grid. In particular, we apply the Fourier interpolation method to
Runge-Kutta schemes for FBSDEs proposed by \cite{Chasscrisan:2013}.

\subsection{Runge-Kutta schemes}

The FBSDE of equation (\ref{eq:bsde}) is discretized on the time partition $\pi$.
Let $q\in\mathbb{N}^{*}$, we consider the $q$-stage Runge-Kutta
scheme giving the following numerical solution at mesh time $t_{i}$
\begin{align} {Z}^{\pi}_{t_i} &=  \CEsp{}{ H^{\varphi_1}_{t_i,\Delta_i} {Y}^{\pi}_{t_{i+1}}   + \Delta_i \sum_{j=1}^{q} \beta_j H^{{\bf \varphi_1}}_{t_i, (1 - \gamma_j)\Delta_i} f(t_{i,j}, X^{\pi}_{t_{i,j}} ,Y^{\pi}_{i,j}, Z^{\pi}_{i,j} )  }{t_i} \label{eq:zrk}\\
{Y}^{\pi}_{t_i} &= \CEsp{}{ {Y}^{\pi}_{t_{i+1}} +  \Delta_i \sum_{j=1}^{q+1} \alpha_j f(t_{i,j}, X^{\pi}_{t_{i,j}} ,Y^{\pi}_{i,j}, Z^{\pi}_{i,j} ) }{t_i} \label{eq:yrk} \end{align}for a set positive coefficients $\{\gamma_{j}\}_{j=1}^{q+1}$ such
that $0=\gamma_{1}<\ldots<\gamma_{q+1}=1$. The intermediate solutions
$\{(Y_{i,j}^{\pi},Z_{i,j}^{\pi})\}_{j=2}^{q}$ take the form \begin{align} {Z}^{\pi}_{i,j} &=  \CEsp{}{ H^{\varphi_j}_{t_{i,j}, \gamma_j\Delta_i} {Y}^{\pi}_{t_{i+1}}   + \Delta_i \sum_{k=1}^{j-1} \beta_{jk} H^{{\varphi_j}}_{t_{i,j}, (\gamma_j - \gamma_k) \Delta_i}f(t_{i,k}, X^{\pi}_{t_{i,k}} ,Y^{\pi}_{i,k}, Z^{\pi}_{i,k} )   }{{t_{i,j}} } \nonumber\\ \label{eq:zirk}\\
{Y}^{\pi}_{i,j} &= \CEsp{}{ {Y}^{\pi}_{t_{i+1}} +  \Delta_i \sum_{k=1}^{j} \alpha_{jk} f(t_{i,k}, X^{\pi}_{t_{i,k}} ,Y^{\pi}_{i,k}, Z^{\pi}_{i,k} ) }{{t_{i,j}}} \label{eq:yirk}\end{align}where \begin{equation} t_{i,j} = t_i + (1- \gamma_j ) \Delta_i \text{, } 1 \leq j \leq q+1 \end{equation} 
with $(Y_{i,1}^{\pi},Z_{i,1}^{\pi})=(Y_{t_{i+1}}^{\pi},Z_{t_{i+1}}^{\pi})$,
$(Y_{i,q+1}^{\pi},Z_{i,q+1}^{\pi})=(Y_{t_{i}}^{\pi},Z_{t_{i}}^{\pi})$
and terminal condition \begin{equation}(Y_{t_{n}},Z_{t_{n}})=(g(X_{T}),\sigma^{*}(T,X_{T})\nabla g(X_{T})). \end{equation}
The coefficients $\{\alpha_{j}\}_{j=1}^{q+1}$, $\{\beta_{j}\}_{j=1}^{q}$,
$\{\alpha_{jk}:1\leq j\leq q,1\leq k\leq j\}$ and $\{\beta_{jk}:1\leq j\leq q,1\leq k<j\}$
are all positive and satisfy \begin{align} \sum_{j=1}^{q+1} \alpha_j  &=  1  \\
\beta_{jj} &= 0, \quad 1 \leq j \leq q \text{,}\\
\sum_{k=1}^{j} \alpha_{jk}  &=  \sum_{k=1}^{j-1} \beta_{jk} = \gamma_j, \quad  1 < j \leq q. 
\end{align}Let $\mathcal{B}^{m}$ denote the set of continuous and bounded functions
on $[0,1]$ such that \begin{equation} \mathcal{B}^m := \left\{ \phi \in \mathcal{C}_{b} :  \int_{0}^1 s^k \phi(s)ds = \delta_{0,k}, k \leq m \text{ and } k,m \in \mathbb{N}^*  \right\}. \end{equation}The
stochastic coefficient $H_{t,\Delta}^{\varphi}$ with $t\in[0,T)$
and $\Delta>0$ is defined as \begin{equation} H^{\varphi}_{t,\Delta} :=\frac{1}{\Delta} \int_t^{t+\Delta} \varphi \left( \frac{s-t}{\Delta} \right) dW_s \end{equation}with
$\varphi\in\mathcal{B}^{m}$ for some $m\in\mathbb{N}^{*}$. 

The global error of the $q-$stage Runge-Kutta scheme $\mathcal{E}_{\pi}$
is defined as
\begin{align} \mathcal{E}_{\pi}^2 & :=  \max_{0 \leq i < n} \normf{ {Y_{t_i} - Y^{\pi}_{t_i}}}{L^2}^2 + \sum_{i=0}^{n-1} \Delta_i \normf{ {Z_{t_i} - Z^{\pi}_{t_i}}}{L^2}^2 \nonumber\\
&= \max_{0 \leq i < n} \CEsp{}{ \norm{Y_{t_i} - Y^{\pi}_{t_i}}^2}{} + \sum_{i=0}^{n-1} \Delta_i \CEsp{}{ \norm{Z_{t_i} - Z^{\pi}_{t_i}}^2}{} \label{eq:Erk} \end{align}and is hence weaker than the error $E_{\pi}$ considered for the Euler
scheme. Nonetheless, the global error $\mathcal{E}_{\pi}$ is easier
to handle since it is strongly related to the local time discretization
error which simplifies the theoretical study in \cite{Chasscrisan:2013}.

The scheme can be represented by the following tableau%

\begin{center} 
\begin{tabular}{c| c c c c c| c c c c}
$\gamma_1$    & $\alpha_{1,1}$ & $0$           & \ldots & $0$         & $0$ & $\beta_{1,1}$ & $0$           & \ldots & $0$       \\
$\gamma_2$        & $\alpha_{2,1}$ & $\alpha_{2,2}$ & \ldots & $0$        & $0$ & $\beta_{2,1}$ &  $\beta_{2,2}$ & \ldots & $0$    \\
$\vdots$   & $\vdots$ & $\vdots$ & $\ddots$ & $\vdots$        & $\vdots$ & $\vdots$       & $\vdots$      & $\ddots$ & $\vdots$     \\
$\gamma_q$ & $\alpha_{q,1}$ & $\alpha_{q,2}$ & \ldots & $\alpha_{q,q}$ & $0$ & $\beta_{q,1}$ & $\beta_{q,2}$ & \ldots & $\beta_{q,q}$\\ \hline
$\gamma_{q+1} $ & $\alpha_1$ & $\alpha_2$ & \ldots & $\alpha_q$ & $\alpha_{q+1}$ & $\beta_1$ & $\beta_2$ & \ldots & $\beta_{q}$  \\
\end{tabular} 

\end{center} 
One can observe that if $\alpha_{q+1}=0$ and $\alpha_{jj}=0$, $1<j\leq q$,
then the $q$-stage Runge-Kutta scheme is explicit. Otherwise, the
scheme is implicit. For instance, the Runge-Kutta schemes with tableau
\begin{center} 
\begin{tabular}{c| c c| c }
$0$ & $0$ & $0$ & $0$ \\ \hline
$1$ & $0$ & $1$ & $1$ \\
\end{tabular} 

\end{center} 
and the scheme with tableau %
\begin{center} 
\begin{tabular}{c| c c| c }
$0$ & $0$ & $0$ & $0$ \\ \hline
$1$ & $\frac{1}{2}$ & $\frac{1}{2}$ & $1$ \\
\end{tabular} 

\end{center} 
known as the Crank-Nicolson scheme constitute $1-$stage implicit
Runge-Kutta schemes. The only $1-$stage explicit Runge-Kutta scheme
admits the tableau %
\begin{center} 
\begin{tabular}{c| c c| c }
$0$ & $0$ & $0$ & $0$ \\ \hline
$1$ & $1$ & $0$ & $1$ \\
\end{tabular} 

\end{center} 

In \cite{Chasscrisan:2013} the implicit and
the explicit $1-$stage Runge-Kutta schemes are shown to be at least one-half
$(\frac{1}{2})$ order convergent. The Crank-Nicolson scheme, already
studied in  \cite{crisanmol:2013}, presents
a first-order of convergence. Notice that the Euler schemes used in
the previous chapters are not $1-$stage Runge-Kutta schemes since
they do not lead to any consistent tableau. Nonetheless, their structure
is equivalent to the explicit $1-$stage Runge-Kutta scheme and both
schemes display the same half $(\frac{1}{2})$ order of convergence.
The following tableau gives a example of explicit $2$-stage Runge-Kutta
schemes of first-order of convergence for $\gamma_{2}\in(0,1]$ and
$\beta_{1}\in[0,1]$. %

\begin{center} 
\begin{tabular}{c| c c c| c c}
$0$    & $0$ & $0$    & $0$         & $0$ & $0$  \\
$\gamma_2$ & $\gamma_2$ & $0$       & $0$ & $\gamma_2$ & $0$    \\ \hline
$1$ & $1-\frac{1}{2 \gamma_2}$ & $\frac{1}{2 \gamma_2}$ & $0$ & $\beta_1$ & $1 - \beta_1$ \\
\end{tabular} 

\end{center} 

\subsection{Further simplification}
From the $q$-stage Runge-Kutta scheme for BSDEs, one notices that
we have at least $2q$ conditional expectations to compute at each
time step. These conditional expectations can be simplified and made
more suitable for numerical implementation if we consider a reasonable
time discretization of the forward SDE. Hence, we make the following
assumption.%
\begin{assump}[Forward process discretization] \label{assump:tdisc} 
The following are assumed throughout this section.
\begin{enumerate} 
\item The forward SDE is discretized with the piecewise constant process $X^{\pi}$ such that for $t\in [t_i,t_{i+1})$ we have $ X^{\pi}_t = X^{\pi}_{t_i} $ pathwise. 
\item The forward SDE time discretization with global error $\mathcal{E}_{X,\pi}$ is of order $m>0$ i.e \begin{equation} \mathcal{E}_{X,\pi}^2 := \max_{0\leq i \leq n} \normf{ {X_{t_i} - X^{\pi}_{t_i} }  }{L^2}^2 = \mathcal{O}( \norm{\pi}^{2 m}). \end{equation} 

\item The forward SDE time discretization admits the conditional characteristic functions $\phi_i : \mathbb{R}^d \times \mathbb{R}^d \rightarrow \mathbb{C}$   \begin{equation}  \phi_i(\nu , x) = \CEsp{}{e^{ {\bf i} \nu^* \left( X^{\pi}_{t_{i+1}} - X^{\pi}_{t_i} \right)  } | X^{\pi}_{t_i} = x }{}\end{equation} and $\Phi_{i,j} : \mathbb{R}^d \times \mathbb{R}^d \rightarrow \mathbb{C}^d$ \begin{equation}  \Phi_{i,j}(\nu , x) = \CEsp{}{H^{\varphi_j}_{t_{i,j}, \gamma_j \Delta_i} e^{ {\bf i} \nu^* \left( X^{\pi}_{t_{i+1}} - X^{\pi}_{t_i} \right)  } | X^{\pi}_{t_i} = x }{} \label{eq:Phidef2}\end{equation} for $0 \leq i < n$ and $1 < j \leq q+1$ with $\varphi_{q+1} = \varphi_1$.

\item There are positive constants $p_0$, $q_0$, $s_0$ ,$K_0$ and $C_0>0$ such that 
  \begin{equation}
    \max \left( \inf_{ s \in \mathbb{R}_d^+ } e^{-s* t}\phi_i({\bf i}s,x) , \inf_{s \in \mathbb{R}_d^+ } e^{-s* t}\phi_i(-{\bf i}s,x) \right)  \leq e^{- K_0 \Delta_i^{-s_0} \norm{t}^{q_0}}, 
\end{equation} 
$\forall t \in \mathbb{R}_d^+$, hence the discrete version of the forward process has conditional exponential moments. In addition, 
\begin{equation} \int_{\mathbb{R}^d} \norm{\phi_i(\nu,x)} d\nu + \max_{1 < j \leq q+1}\int_{\mathbb{R}^d} \norm{\Phi_{i,j}(\nu,x)} d \nu  \leq C_0 \Delta_i ^ {-p_0} .  \end{equation}
\end{enumerate}
\end{assump}

It{\^o}-Taylor expansion
based schemes are an example of SDE discretization satisfying the conditions of Assumption \ref{assump:tdisc}. A more complete presentation of these
schemes can be found in \cite{kloden:1992}.  The next theorem gives a simplification of the BSDE time discretization
expressions.

\begin{thm}
\label{thm:simprkexp}Under Assumption \ref{assump:tdisc} (1), the
solution of the $q$-stage Runge-Kutta scheme satisfies \begin{equation} \{(Y_{i,j}^{\pi},Z_{i,j}^{\pi})\}_{j=2}^{q+1} \in \mathcal{F}_{t_i} \end{equation}
for $0\leq i<n$. Consequently, we can write \begin{align} Z^{\pi}_{i,j} &= \CEsp{}{ H^{\varphi_j}_{t_{i,j}, \gamma_j \Delta_i} \left( Y^{\pi}_{t_{i+1}} + \Delta_i \beta_{j,1}f(t_{i+1}, X^{\pi}_{t_{i+1}}, Y^{\pi}_{t_{i+1}}, Z^{\pi}_{t_{i+1}} )   \right) }{t_i}  \label{eq:zrk2} \\
Y^{\pi}_{i,j} &= \CEsp{}{  Y^{\pi}_{t_{i+1}} + \Delta_i \alpha_{j,1} f(t_{i+1}, X^{\pi}_{t_{i+1}}, Y^{\pi}_{t_{i+1}}, Z^{\pi}_{t_{i+1}} )  }{t_i} \nonumber \\
& + \Delta_i \sum_{k=2}^{j} \alpha_{jk} f(t_{i,k}, X^{\pi}_{t_{i}}, Y^{\pi}_{i,k}, Z^{\pi}_{i,k} )   \label{eq:yrk2}
\end{align} for $0\leq i<n$ and $1<j\leq q+1$ where $\varphi_{q+1}=\varphi_{1}$,
$\beta_{q+1,1}=\beta_{1}$ and $\alpha_{q+1,k}=\alpha_{k}$.\end{thm}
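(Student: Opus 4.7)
The plan is to run a forward induction on $j$ from $j=2$ up to $j=q+1$ that simultaneously establishes $(Y^\pi_{i,j},Z^\pi_{i,j})\in\mathcal{F}_{t_i}$ and the simplified representations (\ref{eq:zrk2})--(\ref{eq:yrk2}). Two structural observations drive the argument. First, under Assumption~\ref{assump:tdisc}(1), $X^\pi$ is constant on each $[t_i,t_{i+1})$, so $X^\pi_{t_{i,k}}=X^\pi_{t_i}$ whenever $1<k\le q+1$ while $X^\pi_{t_{i,1}}=X^\pi_{t_{i+1}}$. Second, each coefficient $H^{\varphi_j}_{t_{i,j},(\gamma_j-\gamma_k)\Delta_i}$ with $1\le k<j$ is a normalized It\^o integral of $\varphi_j$ against the Brownian increments on $[t_{i,j},t_{i,k}]$, hence independent of $\mathcal{F}_{t_{i,j}}$ with $\mathbf{E}[H^{\varphi_j}_{t_{i,j},(\gamma_j-\gamma_k)\Delta_i}\mid\mathcal{F}_{t_{i,j}}]=0$.

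At the inductive step I would assume the claim for all $2\le k<j$. Combined with the first observation, the induction hypothesis yields $f_{i,k}:=f(t_{i,k},X^\pi_{t_{i,k}},Y^\pi_{i,k},Z^\pi_{i,k})\in\mathcal{F}_{t_i}\subset\mathcal{F}_{t_{i,j}}$ for every $1<k<j$, so each such factor pulls out of the conditional expectation in (\ref{eq:zirk}) and is then annihilated by the second observation. The only surviving sum contribution is $k=1$; using $t_{i,1}=t_{i+1}$, $X^\pi_{t_{i,1}}=X^\pi_{t_{i+1}}$, $(Y^\pi_{i,1},Z^\pi_{i,1})=(Y^\pi_{t_{i+1}},Z^\pi_{t_{i+1}})$ and $\gamma_1=0$, I bundle it with the leading $H^{\varphi_j}_{t_{i,j},\gamma_j\Delta_i}Y^\pi_{t_{i+1}}$ term to produce the bracket appearing in (\ref{eq:zrk2}). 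The parallel manipulation of (\ref{eq:yirk}) pulls $\Delta_i\sum_{k=2}^{j}\alpha_{jk}f_{i,k}$ outside the expectation (the $k=j$ summand supplying the implicit contribution) and leaves the $k=1$ piece $Y^\pi_{t_{i+1}}+\Delta_i\alpha_{j,1}f(t_{i+1},X^\pi_{t_{i+1}},Y^\pi_{t_{i+1}},Z^\pi_{t_{i+1}})$ inside, yielding (\ref{eq:yrk2}).

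It remains to upgrade the conditioning $\sigma$-algebra from $\mathcal{F}_{t_{i,j}}$ to $\mathcal{F}_{t_i}$. After the simplifications above, the random variable inside the expectation involves the forward discretization only through $X^\pi_{t_i}$, which is $\mathcal{F}_{t_i}$-measurable, and through $X^\pi_{t_{i+1}}$ coupled with the Brownian integral $H^{\varphi_j}_{t_{i,j},\gamma_j\Delta_i}$; invoking the Markov property of $X^\pi$ encoded in Assumption~\ref{assump:tdisc}(1), the $\mathcal{F}_{t_{i,j}}$-conditional expectation coincides with the $\mathcal{F}_{t_i}$-conditional expectation, delivering both measurability and the final formulas via the tower property. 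The hardest step will be this last reduction, since it is the only place where Assumption~\ref{assump:tdisc}(1) is invoked beyond the straightforward identification of $X^\pi_{t_{i,k}}$ with $X^\pi_{t_i}$ at the intermediate stages, and it is there that one must carefully rule out residual dependence on the Brownian increments over $[t_i,t_{i,j}]$.
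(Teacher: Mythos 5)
Your argument is correct and follows essentially the same route as the paper's own proof: Assumption \ref{assump:tdisc}(1) collapses the conditioning at $t_{i,j}$ onto $X^{\pi}_{t_i}$, the intermediate driver terms for $1<k<j$ pull out by $\mathcal{F}_{t_i}$-measurability and are annihilated by the vanishing conditional mean of $H^{\varphi_j}_{t_{i,j},(\gamma_j-\gamma_k)\Delta_i}$, and the surviving $k=1$ contribution (using $\gamma_1=0$, $t_{i,1}=t_{i+1}$) is bundled exactly as in equations (\ref{eq:zrk2}) and (\ref{eq:yrk2}). The only differences are presentational: you make the induction on the stage index $j$ explicit and you flag the reduction of the conditioning $\sigma$-algebra from $\mathcal{F}_{t_{i,j}}$ to $\mathcal{F}_{t_i}$ as the delicate point, whereas the paper performs that reduction silently by writing the conditional expectation directly against $X^{\pi}_{t_{i,j}}=X^{\pi}_{t_i}$.
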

\begin{proof}
Clearly $(Y_{i,q+1}^{\pi},Z_{i,q+1}^{\pi})=(Y_{t_{i}}^{\pi},Z_{t_{i}}^{\pi})\in\mathcal{F}_{t_{i}}$
from equations (\ref{eq:zrk}) and (\ref{eq:yrk}). For $1<j\leq q$
and $0\leq i<n$, we have \begin{align*} 
{Y}^{\pi}_{i,j} &= \CEsp{}{ {Y}^{\pi}_{t_{i+1}} +  \Delta_i \sum_{k=1}^{j} \alpha_{jk} f(t_{i,k}, X^{\pi}_{t_{i,k}},Y^{\pi}_{i,k}, Z^{\pi}_{i,k} ) 
\left. \right| X^{\pi}_{t_{i,j} }  }{}  \tag{ starting from equation \ref{eq:yirk}} \\
&= \CEsp{}{ {Y}^{\pi}_{t_{i+1}} +  \Delta_i \sum_{k=1}^{j} \alpha_{jk} f(t_{i,k}, X^{\pi}_{t_{i,k}},Y^{\pi}_{i,k}, Z^{\pi}_{i,k} ) \left. \right| X^{\pi}_{t_i}  }{}
\tag{by Assumption \ref{assump:tdisc} since $t_{i,j} \in [t_i, t_{i+1})$} \\
&= \CEsp{}{ {Y}^{\pi}_{t_{i+1}} +  \Delta_i \sum_{k=1}^{j} \alpha_{jk} f(t_{i,k}, X^{\pi}_{t_{i,k}},Y^{\pi}_{i,k}, Z^{\pi}_{i,k} )}{t_i}
\end{align*}so that $Y_{i,j}^{\pi}\in\mathcal{F}_{t_{i}}$. Similar arguments
also show that $Z_{i,j}^{\pi}\in\mathcal{F}_{t_{i}}$ starting from
equation (\ref{eq:zirk}).

Since$\{(Y_{i,j}^{\pi},Z_{i,j}^{\pi})\}_{j=2}^{q+1}\in\mathcal{F}_{t_{i}}$,
we naturally get equation (\ref{eq:yrk2}) from equations (\ref{eq:yirk}) and (\ref{eq:yrk}) .
In addition, knowing that \begin{equation} \CEsp{}{ H^{\varphi_j }_{t_{i,j}, (\gamma_i - \gamma_k)\Delta_i} }{t_i} = 0 \text{ , } 1 < k < j \end{equation}
leads to equation (\ref{eq:zrk2}) from equations (\ref{eq:zirk}) and (\ref{eq:zrk}).
\end{proof}
As a consequence of Assumption \ref{assump:tdisc}, if the $q-$stage
Runge-Kutta scheme and the forward SDE time discretization are of
order $m>0$ then error of the FBSDE numerical solution defined as
$\mathcal{E}_{X,\pi}+\mathcal{E}_{\pi}$ is of order $m$. We must
hence choose the Runge-Kutta scheme and the SDE scheme accordingly.

\subsection{Fourier representation}

Following Theorem \ref{thm:simprkexp}, the intermediate solutions
$\{(u_{i,j},\dot{u}_{i,j})\}_{j=2}^{q+1}$ at mesh time $t_{i}$,
$0\leq i<n$, are given by \begin{align} 
\dot{u}_{i,j}(x) &= \CEsp{}{ H^{\varphi_j}_{t_{i,j}, \gamma_j \Delta_i} \tilde{u}_{i+1}( W_{t_{i+1}} , \beta_{j,1}) | X^{\pi}_{t_i} = x     }{} \label{eq:zrk3}\\
{u}_{i,j}(x) &= \CEsp{}{ \tilde{u}_{i+1}( W_{t_{i+1}} , \alpha_{j,1}) | X^{\pi}_{t_i} = x     }{} + \Delta_i  \sum_{k=2}^{j} \alpha_{jk} f(t_{i,k}, x, u_{i,k}(x), \dot{u}_{i,k}(x) ) \label{eq:yrk3}
\end{align}for $1<j\leq q+1$ with $\varphi_{q+1}=\varphi_{1}$, $\beta_{q+1,1}=\beta_{1}$
and $\alpha_{q+1,k}=\alpha_{k}$ . The approximate solution $u_{i}$
and approximate gradient $\dot{u}_{i}$ at mesh time $t_{i}$, $0\leq i<n$,
are then \begin{align} u_i(x) = u_{i,q+1}(x)  \label{eq:rksol1} \\
\dot{u}_i(x) = \dot{u}_{i,q+1}(x) \label{eq:rksol2}
\end{align}with \begin{equation} \tilde{u}_{i+1}(x , \alpha )= u_{i+1}(x) + \Delta_i \alpha f(t_{i+1}, x, u_{i+1}(x) , \dot{u}_{i+1}(x) )\end{equation}
and \begin{align} u_n(x) &= g(x)  \\
\dot{u}_n(x) &= \sigma^*(T, x) \nabla g(x).
\end{align}

In this setting, we have that
\begin{equation}
   {u}_{i,j}(x) = \CEsp{}{ \tilde{u}_{i+1}( X^{\pi}_{t_{i+1}} , \alpha_{j,1}) | X^{\pi}_{t_i} = x     }{} + \Delta_i  \sum_{k=2}^{j} \alpha_{jk} f(t_{i,k}, x, u_{i,k}(x), \dot{u}_{i,k}(x) ) \label{eq:yrk5-1}
\end{equation}
Note that
\begin{align}
&  \CEsp{}{ \tilde{u}_{i+1}( X^{\pi}_{t_{i+1}} , \alpha_{j,1}) | X^{\pi}_{t_i} = x     }{} =
  \CEsp{x}{ \frac{1}{(2\pi)^d } \int_{\mathbb{R}^d} e^{ {\bf i} \nu^* X^{\pi}_{t_{i+1}} } \mathfrak{F}\left[\tilde{u}_{i+1}(., \alpha_{j,1})\right](\nu)   d\nu }{t_i} \nonumber\\
  &= \frac{1}{(2\pi)^d} \int_{\mathbb{R}^d} \CEsp{x}{  e^{ {\bf i} \nu^* X^{\pi}_{t_{i+1}} }     }{t_i} \mathfrak{F}\left[\tilde{u}_{i+1}(., \alpha_{j,1})\right](\nu)  d\nu   \tag{using Fubini's theorem} \\
  &=  \frac{1}{(2\pi)^d} \int_{\mathbb{R}^d} e^{ {\bf i} \nu^* x } \phi_i(\nu,x) \mathfrak{F}\left[\tilde{u}_{i+1}(., \alpha_{j,1})\right](\nu)  d\nu. \label{eq:yrk5-2}
  \end{align}
Therefore, by (\ref{eq:yrk5-1}) and (\ref{eq:yrk5-1}), we have
  \begin{align}
   {u}_{i,j}(x) &=
   \mathfrak{F}^{-1} \left[ \mathfrak{F}\left[\tilde{u}_{i+1}(., \alpha_{j,1})\right](\nu)  \phi_i( \nu, x )\right](x) \nonumber \\
   &+ \Delta_i \sum_{k=2}^{j } \alpha_{jk} f(t_{i,k}, x,u_{i,k}(x), \dot{u}_{i,k}(x) ) \label{eq:yrk5}
  \end{align}
whenever $\tilde{u}_{i+1}(.,\alpha)$ is Lebesgue integrable. 
  
As to the intermediate solutions $\dot{u}_{i,j}$, $0\leq i<n$ and
$1<j\leq q+1$, we have \begin{align} 
 {\dot{u}_{i,j}(x)} &= \CEsp{}{ H^{\varphi_j}_{t_{i,j}, \gamma_j \Delta_i} \tilde{u}_{i+1}( X^{\pi}_{t_{i+1}} , \beta_{j,1}) | X^{\pi}_{t_i} = x     }{} \nonumber \\
&= \CEsp{x}{H^{\varphi_j}_{t_{i,j}, \gamma_j \Delta_i}  \frac{1}{(2\pi)^d} \int_{\mathbb{R}^d} e^{ {\bf i} \nu^* X^{\pi}_{t_{i+1}} } \mathfrak{F}\left[\tilde{u}_{i+1}(., \beta_{j,1})\right](\nu)   d\nu }{t_i} \nonumber \\
&=\frac{1}{(2\pi)^d} \int_{\mathbb{R}^d} \CEsp{x}{H^{\varphi_j}_{t_{i,j}, \gamma_j \Delta_i}   e^{ {\bf i} \nu^* X^{\pi}_{t_{i+1}} }  }{t_i} \mathfrak{F}\left[\tilde{u}_{i+1}(., \beta_{j,1})\right](\nu)   d\nu  \tag{using Fubini's theorem} \nonumber\\
&=\frac{1}{(2\pi)^d} \int_{\mathbb{R}^d} e^{ {\bf i} \nu^* x } \Phi_{i,j}(\nu,x)  \mathfrak{F}\left[\tilde{u}_{i+1}(., \beta_{j,1})\right](\nu)   d\nu \nonumber \\
&= \mathfrak{F}^{-1} \left[ \mathfrak{F}\left[\tilde{u}_{i+1}(., \alpha_{j,1})\right](\nu)  \Phi_{i,j}( \nu, x )\right](x)  \label{eq:zrk5}
\end{align}for an integrable function $\tilde{u}_{i+1}(.,\alpha)$.

Even if the expressions in equations (\ref{eq:yrk5}) and (\ref{eq:zrk5})
appear too general, they are implementable with the Fourier interpolation
method for $d=1$ in various particular cases. One can retrieve the characteristics
$\phi_{i}$ and $\Phi_{i,j}$ and also perform the corrections due to the transform of equation (\ref{eq:trans2}) for many SDE time discretizations.  The following lemma helps in retrieving the conditional characteristics.

\begin{lem}
The conditional characteristics $\Phi_{i,j}$ write
\begin{equation} 
{\Phi_{i,j}(\nu, x)}  =    \CEsp{x}{  {\bf H}^*_{{i,j}}  {\bf i}\nu   e^{ {\bf i} \nu^* \left( X^{\pi}_{t_{i+1}} - X^{\pi}_{t_i} \right)}  }{t_i}  \label{eq:zrk51}
\end{equation}
with \begin{equation} {\bf H}_{i,j} = \frac{1}{\gamma_j \Delta_i} \int_{t_{i,j}}^{t_{i+1}} D_s X^{\pi}_{t_{i+1}} \varphi_j \left( \frac{s-t_{i,j}}{\gamma_j \Delta_i} \right)ds  
\label{eq:zrk51p} \end{equation} where $D_s X^{\pi}_{t_{i+1}}$ is the Malliavin derivative of $X_{t_{i+1}}^{\pi}$
given $X_{t_{i}}^{\pi}=x$
\end{lem}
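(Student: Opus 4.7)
The plan is to establish (\ref{eq:zrk51}) by Malliavin integration by parts, transferring the stochastic weight $H^{\varphi_j}_{t_{i,j},\gamma_j \Delta_i}$ off the complex exponential and onto the Malliavin derivative of $X^{\pi}_{t_{i+1}}$.

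First, I would observe that because $\varphi_j$ is deterministic, the It\^o integral
$$ H^{\varphi_j}_{t_{i,j},\gamma_j \Delta_i} = \int_{t_{i,j}}^{t_{i+1}} u_s^*\, dW_s,\qquad u_s := \frac{1}{\gamma_j\Delta_i}\,\varphi_j\!\left(\frac{s - t_{i,j}}{\gamma_j\Delta_i}\right){\bf 1}_{[t_{i,j},t_{i+1}]}(s), $$
coincides with the Skorohod integral $\delta(u)$ of the deterministic $\mathbb{R}^d$-valued kernel $u$. Applying the Malliavin duality formula conditionally on $\mathcal{F}_{t_i}$ (equivalently, restarting the discretized forward process at $X^{\pi}_{t_i}=x$ and working on the sub-interval $[t_i,T]$), I obtain
$$ \CEsp{x}{H^{\varphi_j}_{t_{i,j},\gamma_j \Delta_i}\, F}{t_i} = \CEsp{x}{\int_{t_{i,j}}^{t_{i+1}} u_s^*\, D_s F\, ds}{t_i} $$
for any sufficiently regular random variable $F$.

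Second, I would apply this identity with $F = e^{{\bf i}\nu^*(X^{\pi}_{t_{i+1}} - X^{\pi}_{t_i})}$. Since $X^{\pi}_{t_i}$ is $\mathcal{F}_{t_i}$-measurable, $D_s X^{\pi}_{t_i} = 0$ for $s > t_i$, and the chain rule for the Malliavin derivative yields
$$ D_s F = {\bf i}\,(D_s X^{\pi}_{t_{i+1}})^*\nu \cdot F. $$
Substituting this into the duality identity and pulling the $\mathcal{F}_{t_i}$-measurable factor $\nu$ out of the $ds$-integral, the integrand becomes ${\bf i}\,\nu^*\!\left(D_s X^{\pi}_{t_{i+1}}\,u_s\right)F$. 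Integrating in $s$ collapses the expression in parentheses to exactly $\mathbf{H}_{i,j}$, and the reality of $\mathbf{H}_{i,j}$ allows the rewriting ${\bf i}\nu^*\mathbf{H}_{i,j} = \mathbf{H}_{i,j}^*({\bf i}\nu)$, producing the right-hand side of (\ref{eq:zrk51}).

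The main technical step is justifying the use of the duality formula in its conditional form together with the required regularity of $F$. Under the It\^o--Taylor discretizations admitted by Assumption \ref{assump:tdisc}, $X^{\pi}_{t_{i+1}}$ is Malliavin differentiable with moments of all orders, and the boundedness of $e^{{\bf i}\nu^*\cdot}$ makes $F$ globally bounded with bounded Malliavin derivative, so the duality formula applies without any additional work. The remainder of the argument is then a routine rearrangement.
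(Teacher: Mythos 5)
Your proposal is correct and follows exactly the route the paper takes: its one-line proof reads ``applying the duality formula and the chain rule successively to equation (\ref{eq:Phidef2}),'' which is precisely your integration-by-parts step (identifying $H^{\varphi_j}_{t_{i,j},\gamma_j\Delta_i}$ with the Skorohod integral of a deterministic kernel) followed by the chain rule $D_sF={\bf i}\,(D_sX^{\pi}_{t_{i+1}})^*\nu\,F$ and the collapse of the $ds$-integral into $\mathbf{H}_{i,j}$. You have simply supplied the details the paper leaves implicit.
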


\begin{proof}
The lemma is proved by applying the duality formula and the chain rule successively to equation (\ref{eq:Phidef2}).
\end{proof}

\subsubsection{Half-order It{\^o}-Taylor schemes}
The Euler scheme constitutes the main example of half-order It{\^o}-Taylor
scheme with step 
\begin{equation*} X^{\pi}_{t_{i+1}} =  X^{\pi}_{t_{i}} + a(t_i, X^{\pi}_{t_{i}})\Delta_i + \sigma(t_i, X^{\pi}_{t_{i}}) \Delta W_i.
\end{equation*}
In addition, we
have that  $D_s \Delta_i = {\bf 0}_{d \times 1}$ and $D_s \Delta W_i = { \bf I}_{d \times d}$ for $s\in(t_{i},t_{i+1})$ where ${\bf 0}$ and ${\bf I}$ are the zero matrix and the identity matrix respectively.
Hence, \begin{equation*} D_s X^{\pi}_{t_{i+1}} = \sigma(t_i, X^{\pi}_{t_{i}}) \end{equation*}
so we get, from equation (\ref{eq:zrk51}), that \begin{align} {\Phi_{i,j}(\nu, x)} &=  \sigma^*(t_i, x) {\bf i } \nu \CEsp{x}{ e^{ {\bf i} \nu^* \left( X^{\pi}_{t_{i+1}} - X^{\pi}_{t_i} \right)  } }{t_i}  \text{ (since $\varphi_j \in \mathcal{B}^0$),} \nonumber \\
&=    \sigma^*(t_i, x) {\bf i } \nu \phi_i(\nu,x). \label{eq:zrk5half}
\end{align}The conditional characteristic function is explicitly given by \begin{equation} \phi_i( \nu, x ) = \exp\left \{ \Delta_i \left(  {\bf i}\nu^*a(t_i, x) - \frac{1}{2} \nu^* \sigma^2(t_i, x) \nu     \right) \right \} \label{eq:yrk5half} \end{equation}since
the increment has a Gaussian distribution. 

Equations (\ref{eq:yrk5}) and (\ref{eq:zrk5}) along with the characteristics
of equations (\ref{eq:yrk5half}) and (\ref{eq:zrk5half}) define
the Fourier method under half-order It{\^o}-Taylor schemes and
the method is implementable in one dimension ($d=1$) with the procedure given in section \ref{sec:implement}.
The following theorem generalizes the result of Theorem \ref{lem:altT}
to Runge-Kutta schemes under half-order It{\^o}-Taylor schemes.
\begin{thm}
Let $\tilde{u}^{\alpha, \beta}_{i+1}(.,y)$ be the alternative transform
defined in equation (\ref{eq:trans2}) of the approximate solution
$\tilde{u}_{i+1}(.,y)$. Then the intermediate solutions $u_{i,j}$
and $\dot{u}_{i,j}$ in equations (\ref{eq:yrk5}) and (\ref{eq:zrk5})
satisfy \begin{align} {u}_{i,j}(x) &= \mathfrak{F}^{-1}[ \mathfrak{F}[\tilde{u}^{\alpha,\beta}_{i+1}(.,\alpha_{j,1})](\nu) \phi_i(\nu,x) ](x) \nonumber \\
& -  \alpha [ (x + \Delta_i a(t_i, x))^2 + \Delta_i \sigma^2(t_i, x)] - \beta (x + \Delta_i a(t_i, x))  \nonumber\\
& + \Delta_i  \sum_{k=2}^{j} \alpha_{jk} f(t_{i,k}, x, u_{i,k}(x), \dot{u}_{i,k}(x) ) \\
\dot{ u}_{i,j} (x) &= \sigma(t_i, x) \mathfrak{F}^{-1}[ \mathfrak{F}[\tilde{u}^{\alpha,\beta}_{i+1}(.,\beta_{j,1})](\nu) {\bf i} \nu \phi_i(\nu,x) ](x) \nonumber \\
& - \sigma(t_i, x) [ 2 \alpha(x + \Delta_i a(t_i, x))  + \beta]. \end{align}under a half-order It{\^o}-Taylor scheme when $d=1$.
\end{thm}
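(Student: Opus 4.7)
The plan is to mimic, stage by stage, the argument of Theorem \ref{lem:altT} by exploiting the linearity of $\mathfrak{F}$ and $\mathfrak{F}^{-1}$ together with two elementary moment identities for the Euler (half-order It\^o--Taylor) step. First I would write $\tilde u_{i+1}(z,y) = \tilde u^{\alpha,\beta}_{i+1}(z,y) - \alpha z^2 - \beta z$ and substitute into equations (\ref{eq:yrk5}) and (\ref{eq:zrk5}), using $\Phi_{i,j}(\nu,x) = \sigma(t_i,x){\bf i}\nu\phi_i(\nu,x)$ from equation (\ref{eq:zrk5half}) (valid in $d=1$). By linearity each expression splits into the desired leading term and two polynomial correction integrals of the form $\mathfrak{F}^{-1}[\mathfrak{F}[p](\nu)\phi_i(\nu,x)](x)$, or its ${\bf i}\nu$-weighted version, with $p(z)\in\{z,z^2\}$.

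Next I would evaluate these polynomial integrals. Since $\phi_i(\nu,x)$ is the conditional characteristic function of the Euler increment, the convolution interpretation already used to derive (\ref{eq:iuf1})--(\ref{eq:guf2}) gives $\mathfrak{F}^{-1}[\mathfrak{F}[p](\nu)\phi_i(\nu,x)](x) = \mathbf{E}[p(X^{\pi}_{t_{i+1}})\mid X^{\pi}_{t_i}=x]$. Under the Gaussian conditional law $\mathcal{N}\bigl(x+\Delta_i a(t_i,x),\Delta_i\sigma^2(t_i,x)\bigr)$ this produces $x+\Delta_i a(t_i,x)$ for $p(z)=z$ and $(x+\Delta_i a(t_i,x))^2+\Delta_i\sigma^2(t_i,x)$ for $p(z)=z^2$, which is exactly the $(\alpha,\beta)$ correction stated for $u_{i,j}$. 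For $\dot u_{i,j}$, combining the symbolic rule ${\bf i}\nu\mathfrak{F}[p](\nu) = \mathfrak{F}[p'](\nu)$ with the same convolution identity yields $\mathfrak{F}^{-1}[\mathfrak{F}[p](\nu){\bf i}\nu\phi_i(\nu,x)](x) = \mathbf{E}[p'(X^{\pi}_{t_{i+1}})\mid X^{\pi}_{t_i}=x]$, which evaluates to $1$ for $p(z)=z$ and to $2(x+\Delta_i a(t_i,x))$ for $p(z)=z^2$; the $\sigma(t_i,x)$ factor pulled out of $\Phi_{i,j}$ then reproduces the $-\sigma(t_i,x)[2\alpha(x+\Delta_i a(t_i,x))+\beta]$ term.

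Finally, the additive driver sum $\Delta_i\sum_{k=2}^{j}\alpha_{jk}f(t_{i,k},x,u_{i,k}(x),\dot u_{i,k}(x))$ appearing in (\ref{eq:yrk5}) is exterior to the Fourier inversion and passes through unchanged, while equation (\ref{eq:zrk5}) carries no such sum. The argument is identical for every stage $j=2,\dots,q+1$, so the whole claim reduces to invoking Theorem \ref{lem:altT} at each stage with $\tilde u_{i+1}(\cdot,\alpha_{j,1})$ (resp.\ $\tilde u_{i+1}(\cdot,\beta_{j,1})$) playing the role of $u_{i+1}$. The main bookkeeping point---not a real obstacle---is simply that the transform parameters $(\alpha,\beta)$ are determined, via Lemma \ref{lem:paramvalue}, by the boundary values of $\tilde u_{i+1}(\cdot,y)$ at the grid endpoints and must therefore be computed separately for each relevant value of $y\in\{\alpha_{j,1},\beta_{j,1}\}$.
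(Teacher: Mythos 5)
Your proposal is correct, and it is worth noting that the paper actually states this theorem \emph{without} a written proof: it is presented as an immediate generalization of Theorem \ref{lem:altT}, whose own proof is deferred to the earlier convolution-method paper, while the fully written-out argument in this section is the one for the first-order It\^o--Taylor analogue. Your route agrees with that written-out pattern on the $u_{i,j}$ part (split off $\alpha z^2+\beta z$ by linearity, identify the correction as $\CEsp{x}{\alpha (X^{\pi}_{t_{i+1}})^2+\beta X^{\pi}_{t_{i+1}}}{t_i}$, and evaluate it from the Gaussian mean and variance of the Euler step), but differs on the gradient part: the paper handles the correction $\CEsp{x}{H^{\varphi_j}_{t_{i,j},\gamma_j\Delta_i}(\alpha (X^{\pi}_{t_{i+1}})^2+\beta X^{\pi}_{t_{i+1}})}{t_i}$ via the Malliavin duality formula together with $D_sX^{\pi}_{t_{i+1}}=\sigma(t_i,x)$ and $\varphi_j\in\mathcal{B}^0$, whereas you use the Fourier symbol rule ${\bf i}\nu\,\mathfrak{F}[p](\nu)=\mathfrak{F}[p'](\nu)$ to reduce the ${\bf i}\nu$-weighted correction to $\CEsp{x}{p'(X^{\pi}_{t_{i+1}})}{t_i}$. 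Both are integration by parts in disguise and yield identical values ($2\alpha(x+\Delta_i a(t_i,x))+\beta$), so your argument is sound; the one trade-off is that your Fourier-side shortcut relies on $\Phi_{i,j}(\nu,x)=\sigma(t_i,x){\bf i}\nu\phi_i(\nu,x)$, which is specific to the half-order scheme, while the duality-formula route is the one that survives verbatim for the first-order scheme where $\Phi_{i,j}$ acquires the extra factor $\zeta_i(\nu,x)$. A minor point you could make explicit is that $\mathfrak{F}[z]$ and $\mathfrak{F}[z^2]$ only exist as tempered distributions, so the ``polynomial correction integrals'' are most cleanly justified by passing directly to the probabilistic convolution identity $\mathfrak{F}^{-1}[\mathfrak{F}[g](\nu)\phi_i(\nu,x)](x)=\CEsp{x}{g(X^{\pi}_{t_{i+1}})}{t_i}$, exactly as the paper does.
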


\subsubsection{First-order It{\^o}-Taylor schemes}

Consider the first-order scheme
\begin{equation*}
  X^{\pi}_{t_{i+1}} =  X^{\pi}_{t_{i}} + a(t_i, X^{\pi}_{t_{i}})\Delta_i + \sigma(t_i, X^{\pi}_{t_{i}}) \Delta W_i
  + \sigma^2(t_i, X^{\pi}_{t_{i}}) \int_{t_i}^{t_{i+1}} \int_{t_i}^t dW_u dW_t.
\end{equation*}
Then knowing that
\begin{equation}
  D_s \int_{t_i}^{t_{i+1}} \int_{t_i}^t dW_u dW_t = {\bf D}( \Delta W_i ),\label{eq:mal2} 
\end{equation}
using the fundamental theorem of calculus where ${\bf D}( x )$ is the diagonal matrix composed with the elements of $x$, for $s\in(t_{i},t_{i+1})$, the Malliavin derivative of the discretized forward process is given
by
\begin{equation}
  D_s X^{\pi}_{t_{i+1}} = \sigma(t_i, x) + \sigma^2(t_i, x){\bf D}( \Delta W_i ).
\end{equation} 
Equation (\ref{eq:zrk51}) leads to
\begin{align}
  {\Phi_{i,j}(\nu, x)} 
&=  \sigma^*(t_i, x) {\bf i } \nu \CEsp{x}{ e^{ {\bf i} \nu^* \left( X^{\pi}_{t_{i+1}} - X^{\pi}_{t_i} \right)  }  }{t_i} \nonumber \\ &+   \CEsp{x}{ {\bf D}( \Delta W_i ) {\sigma^2(t_i, x)}{\bf i } \nu  e^{ {\bf i} \nu^* \left( X^{\pi}_{t_{i+1}} - X^{\pi}_{t_i} \right)  }   }{t_i} \tag{since $\varphi_j \in \mathcal{B}^0$} \nonumber \\
&=   \sigma^*(t_i, x) {\bf i } \nu \phi_i(\nu,x)  +     \CEsp{x}{ {\bf D}( \sigma^2(t_i, x) {\bf i }\nu  ) \Delta W_i   e^{ {\bf i} \nu^* \left( X^{\pi}_{t_{i+1}} - X^{\pi}_{t_i} \right)  } }{t_i} \nonumber  \\
&= ({\bf I} -  {\bf D}( \sigma^2(t_i, x) {\bf i }\nu  ) \Delta_i)^{-1} {\sigma^*(t_1,x)}  {\bf i}\nu {\phi}_{i} (\nu,x) \label{eq:zrkint}
\end{align}
since
\begin{align*}
  \CEsp{x}{ \Delta W_i  e^{ {\bf i} \nu^* \left( X^{\pi}_{t_{i+1}} - X^{\pi}_{t_i} \right)  }  }{t_i} &=  \sigma^*(t_i, x) \Delta_i \CEsp{x}{  {\bf i} \nu e^{ {\bf i} \nu^* \left( X^{\pi}_{t_{i+1}} - X^{\pi}_{t_i} \right)  } }{t_i}  \\ &+ {\bf D}(  \sigma^2(t_i, x) {\bf i }\nu  ) \Delta_i \CEsp{x}{   \Delta W_i e^{ {\bf i} \nu^* \left( X^{\pi}_{t_{i+1}} - X^{\pi}_{t_i} \right)  } }{t_i}
\end{align*}
using the duality formula, so that 
\begin{align*} \CEsp{x}{ \Delta W_i  e^{ {\bf i} \nu^* \left( X^{\pi}_{t_{i+1}} - X^{\pi}_{t_i} \right)  }  }{t_i}  &=   { \zeta}_i(\nu, x) \Delta_i\sigma^*(t_i, x) \CEsp{x}{ {\bf i} \nu   e^{ {\bf i} \nu^* \left( X^{\pi}_{t_{i+1}} - X^{\pi}_{t_i} \right)  }  }{t_i} \nonumber \\
 &= \zeta_i(\nu, x) \Delta_i\sigma^*(t_i, x) {\bf i} \nu \phi_i(\nu,x) \label{eq:fp1}
  \end{align*} with 
\begin{equation}  
\zeta_{i}(\nu,x) = ( {\bf I} -  {\bf D}( \sigma^2(t_i, x) {\bf i }\nu  ) \Delta_i)^{-1}.
\end{equation}

As to the conditional characteristic $\phi_{i}$, it can be easily derived as
\begin{equation}
  \phi_i(\nu, x) = det( \zeta_{i}(\nu,x) )^{\frac{1}{2}} \exp \left( \frac{1}{2} {\bf i }\nu^*  \zeta^{-1}_{i}(\nu,x) {\bf 1}_{d \times 1}
  + {\bf i }\nu^* \kappa_{i}(x)\right)  \label{eq:char1or}\end{equation}
where
\begin{equation*} 
  \kappa_{i}(x) = a(t_i,x)\Delta_i - \frac{1}{2}(\sigma^2(t_i,x)\Delta_i + 1 ){\bf 1}_{d \times 1}
\end{equation*}
knowing that $X^{\pi}_{t_{i+1}} - X^{\pi}_{t_i}$, given $X^{\pi}_{t_i}=x$, is an affine function of a multivariate non-central $\chi^2$ random variable with $1$ degree of freedom and non-centrality parameters $1$.

Equations (\ref{eq:yrk5}) and (\ref{eq:zrk5})  along with the expressions
in equations (\ref{eq:zrkint}) and (\ref{eq:char1or}) characterize the method under first
order discretizations on the forward process. The procedure introduced in Section \ref{sec:implement}
allows us to do the necessary computations given the characteristics $\phi_{i}$
and $\Phi_{i,j}$ and using the following theorem.
\begin{thm}
Let $\tilde{u}^{\alpha, \beta}_{i+1}(.,y)$ be the alternative transform
defined in equation (\ref{eq:trans2}) of the approximate solution
$\tilde{u}_{i+1}(.,y)$. Then the intermediate solutions $u_{i,j}$
and $\dot{u}_{i,j}$ in equations (\ref{eq:yrk5}) and (\ref{eq:zrk5}) 
satisfy
\begin{align}
  {u}_{i,j}(x) &= \mathfrak{F}^{-1}[ \mathfrak{F}[\tilde{u}^{\alpha,\beta}_{i+1}(.,\alpha_{j,1})](\nu) \phi_i(\nu,x) ](x)  \nonumber \\
  &-  \alpha \left[ (x + \Delta_i a(t_i,x) )^2 + \Delta_i \sigma^2(t_i,x) + \frac{1}{2}\Delta^2_i \sigma^4(t_i, x) \right] \nonumber \\
  &- \beta (x + \Delta_i a(t_i, x) ) + \Delta_i  \sum_{k=2}^{j} \alpha_{jk} f(u_{i,k}(x), \dot{u}_{i,k}(x) ) \label{eq:rklemy} \\
  \dot{ u}_{i,j} (x) &= \sigma(t_i, x) \mathfrak{F}^{-1}[ \mathfrak{F}[\tilde{u}^{\alpha,\beta}_{i+1}(.,\beta_{j,1})](\nu) {\bf i} \nu  \zeta_i(\nu,x) \phi_i(\nu,x) ](x) \nonumber \\
  &- \sigma(t_i, x) \left[ 2 \alpha \left(x + \Delta_i a(t_i, x) + \Delta_i \sigma^2(t_i, x)  \right)  + \beta \right] \label{eq:rklemz}
\end{align}
under a first-order It{\^o}-Taylor scheme when $d=1$.
\end{thm}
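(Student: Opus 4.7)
The plan is to mimic the proof of Theorem \ref{lem:altT}, but with two modifications: the underlying forward process is now a first order It{\^o}--Taylor step rather than an Euler step, and the expectation for the $Z$-component is weighted by the stochastic coefficient $H^{\varphi_j}_{t_{i,j},\gamma_j\Delta_i}$. The starting point is to invert the transform (\ref{eq:trans2}) by writing
\begin{equation*}
\tilde{u}_{i+1}(x,y)\;=\;\tilde{u}^{\alpha,\beta}_{i+1}(x,y)\,-\,\alpha x^{2}\,-\,\beta x,
\end{equation*}
and to substitute this identity into equations (\ref{eq:yrk5}) and (\ref{eq:zrk5}). By linearity of the conditional expectation, each of $u_{i,j}(x)$ and $\dot u_{i,j}(x)$ splits into three pieces: the ``main'' piece in $\tilde u^{\alpha,\beta}_{i+1}$, which reproduces the Fourier--inverse term in (\ref{eq:rklemy})--(\ref{eq:rklemz}) since $\tilde u^{\alpha,\beta}_{i+1}$ now satisfies Assumption \ref{assump:val}; and two ``polynomial correction'' pieces produced by $\alpha x^{2}$ and $\beta x$.

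For the $u_{i,j}$ formula the corrections reduce to the conditional moments
\begin{equation*}
\mathbf{E}^{x}\!\left[X^{\pi}_{t_{i+1}}\right]\quad\text{and}\quad\mathbf{E}^{x}\!\left[(X^{\pi}_{t_{i+1}})^{2}\right],
\end{equation*}
which I would compute directly from the first order step
$X^{\pi}_{t_{i+1}}=x+a\Delta_{i}+\sigma\Delta W_{i}+\tfrac12\sigma^{2}((\Delta W_{i})^{2}-\Delta_{i})$,
using the Gaussian moments $\mathbf{E}[\Delta W_i]=0$, $\mathbf{E}[(\Delta W_i)^2]=\Delta_i$, $\mathbf{E}[(\Delta W_i)^3]=0$ and $\mathbf{E}[(\Delta W_i)^4]=3\Delta_i^2$. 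This yields $x+\Delta_{i}a(t_{i},x)$ and $(x+\Delta_{i}a(t_{i},x))^{2}+\Delta_{i}\sigma^{2}(t_{i},x)+\tfrac12\Delta_{i}^{2}\sigma^{4}(t_{i},x)$ respectively, matching the bracketed quantities in (\ref{eq:rklemy}).

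For the $\dot u_{i,j}$ formula the analogous computations are
\begin{equation*}
\mathbf{E}^{x}\!\left[H^{\varphi_{j}}_{t_{i,j},\gamma_{j}\Delta_{i}}X^{\pi}_{t_{i+1}}\right]\quad\text{and}\quad\mathbf{E}^{x}\!\left[H^{\varphi_{j}}_{t_{i,j},\gamma_{j}\Delta_{i}}(X^{\pi}_{t_{i+1}})^{2}\right].
\end{equation*}
Here I would apply the Malliavin duality formula exactly as in the derivation of (\ref{eq:zrk51}), pulling the Skorohod integral $H^{\varphi_{j}}$ onto the integrand so that the expectations become
\begin{equation*}
\mathbf{E}^{x}\!\left[\tfrac{1}{\gamma_{j}\Delta_{i}}\!\int_{t_{i,j}}^{t_{i+1}}\!\!\!\varphi_{j}\!\left(\tfrac{s-t_{i,j}}{\gamma_{j}\Delta_{i}}\right)D_{s}F\,ds\right],\qquad F\in\{X^{\pi}_{t_{i+1}},(X^{\pi}_{t_{i+1}})^{2}\}.
\end{equation*}
Since $D_{s}X^{\pi}_{t_{i+1}}=\sigma(t_{i},x)+\sigma^{2}(t_{i},x)\Delta W_{i}$ is constant in $s$ on $(t_{i},t_{i+1})$ and $\varphi_{j}\in\mathcal{B}^{0}$, the $s$-integral collapses to $\int_{0}^{1}\varphi_{j}(u)\,du=1$, so the calculation reduces to $\mathbf{E}^{x}[\sigma+\sigma^{2}\Delta W_{i}]=\sigma(t_{i},x)$ for the first moment and, for the second moment, to $2\,\mathbf{E}^{x}[X^{\pi}_{t_{i+1}}(\sigma+\sigma^{2}\Delta W_{i})]$, which expands via the Gaussian moments above to $2\sigma(t_{i},x)(x+\Delta_{i}a(t_{i},x)+\Delta_{i}\sigma^{2}(t_{i},x))$. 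Multiplying these by $-\beta$ and $-\alpha$ respectively and factoring out $\sigma(t_{i},x)$ reproduces the bracketed correction in (\ref{eq:rklemz}). The main obstacle, I expect, is bookkeeping the $\Delta_{i}\sigma^{2}$ cross term in the second Malliavin identity: it is the new contribution absent from the Euler case of Theorem \ref{lem:altT}, and it is what promotes the $2\alpha(x+\Delta_i a)$ of that theorem to $2\alpha(x+\Delta_i a+\Delta_i\sigma^2)$ here.
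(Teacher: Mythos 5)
Your proposal is correct and follows essentially the same route as the paper: split off the polynomial correction $\alpha x^2+\beta x$ from the transform, compute $\CEsp{x}{X^{\pi}_{t_{i+1}}}{t_i}$ and $\CEsp{x}{(X^{\pi}_{t_{i+1}})^2}{t_i}$ directly for the $u_{i,j}$ formula, and handle the $H^{\varphi_j}$-weighted corrections for $\dot u_{i,j}$ via the Malliavin duality formula with $D_sX^{\pi}_{t_{i+1}}=\sigma+\sigma^2\Delta W_i$ and $\varphi_j\in\mathcal{B}^0$. The only cosmetic difference is that the paper applies the duality formula a second time to evaluate $\CEsp{x}{\Delta W_i X^{\pi}_{t_{i+1}}}{t_i}$ where you use Gaussian moments, and both yield the same $\Delta_i\sigma^2$ shift in the gradient correction.
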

\begin{proof}
By the definition of the alternative transform, we must have that
\begin{align}
  {u}_{i,j}(x) &= \mathfrak{F}^{-1}[ \mathfrak{F}[\tilde{u}^{\alpha,\beta}_{i+1}(.,\alpha_{j,1})](\nu) \phi_i(\nu,x) ](x)
  -  \CEsp{x}{ \alpha (X^{\pi}_{t_{i+1}})^2 + \beta X^{\pi}_{t_{i+1}} }{t_i} \nonumber \\
  & + \text{   } \Delta_i \mathbf{1}_{ \{j>2\} } \sum_{k=2}^{j-1} \alpha_{jk} f(u_{i,k}(x), \dot{u}_{i,k}(x) ). \label{eq:prk1}
\end{align}
Notice that
\begin{equation}
  \CEsp{x}{ X^{\pi}_{t_{i+1}} }{t_i} = x + \Delta_i a(t_i, x) \label{eq:prk2}
\end{equation}
and
\begin{align}
&  \CEsp{x}{ (X^{\pi}_{t_{i+1}})^2 }{t_i} = \CEsp{x}{ X^{\pi}_{t_{i+1}} }{t_i}^2 + {\bf Var}_{t_i}^{x}[X^{\pi}_{t_{i+1}}]  \nonumber\\
&= (x + \Delta_i a(t_i, x))^2  + \CEsp{x}{ \left( \sigma(t_i, x) \Delta W_i + \sigma^2(t_i, x) \int_{t_i}^{t_{i+1}}\int_{t_i}^t dW_u dW_t \right)^2 }{t_i} \nonumber\\
&= (x + \Delta_i \sigma(t_i, x) )^2   + \Delta_i \sigma^2(t_i, x) + \frac{1}{2}\Delta^2_i \sigma^4(t_i, x). \label{eq:prk3} 
\end{align}
Equations (\ref{eq:prk1}), (\ref{eq:prk2}) and (\ref{eq:prk3}) lead
to the expression for $u_{i,j}$ in equation (\ref{eq:rklemy}).

The definition of the alternative transform also requires
\begin{align}
  \dot{ u}_{i,j} (x) &= \sigma(t_i, x) \mathfrak{F}^{-1}[ \mathfrak{F}[\tilde{u}^{\alpha,\beta}_{i+1}(.,\beta_{j,1})](\nu)  {\bf i} \nu  \zeta_i(\nu,x) \phi_i(\nu,x) ](x)  \nonumber \\
  &- \CEsp{x}{ H^{\varphi_j}_{t_{i,j}, \gamma_j \Delta_i} ( \alpha (X^{\pi}_{t_{i+1}})^2 + \beta X^{\pi}_{t_{i+1}} ) }{t_i} \nonumber \\
  &= \sigma(t_i,x) \mathfrak{F}^{-1}[ \mathfrak{F}[\tilde{u}^{\alpha,\beta}_{i+1}(.,\beta_{j,1})](\nu)  {\bf i} \nu  \zeta_i(\nu,x) \phi_i(\nu,x) ](x)  \nonumber \\
  &- \sigma(t_i, x)  \CEsp{x}{ 2\alpha (X^{\pi}_{t_{i+1}}) + \beta }{t_i}  \nonumber \\
  &- \sigma^2(t_i, x)  \CEsp{x}{ \Delta W_i \left(2\alpha (X^{\pi}_{t_{i+1}}) + \beta \right) }{t_i}  \tag{using the duality formula} \nonumber\\
  &= \sigma(t_i, x) \mathfrak{F}^{-1}[ \mathfrak{F}[\tilde{u}^{\alpha,\beta}_{i+1}(.,\beta_{j,1})](\nu)  {\bf i} \nu  \zeta_i(\nu,x) \phi_i(\nu,x) ](x) \nonumber \\
  &- \sigma(t_i, x) \left[ 2 \alpha \left(x + \Delta_i a(t_i, x) + \Delta_i \sigma^2(t_i, x) \right)  + \beta \right] 
\end{align}
using the duality formula once again. 
\end{proof}

The implementation of higher order time discretization for FBSDEs on the alternative grid is described in the following algorithm. Algorithm \ref{alg:convHD} produces the numerical intermediate solutions $\{ {u}_{i,j,k} \}_{k=0}^{N_iN}$, $\{ {\tilde{u}}_{i,j,k} \}_{k=0}^{N_iN}$ and $\{ {\dot{u}}_{i,j,k} \}_{k=0}^{N_iN}$ at time step $t_i$, $ 0 \leq i < n$ and stage $j$, $1 \leq j \leq q+1$
for the approximate solution $u_{i}$, the intermediate solution $\tilde{u}_{i}$ and
the approximate gradient $\dot{u}_{i}$ respectively, $i=0,1,\ldots,n-1$.

\begin{algo}  \label{alg:convHD} Fourier Interpolation Method on Alternative Grid for $q$-stage Runge-Kutta schemes
\begin{enumerate}
\item Discretize the restricted real space $[x_0-\frac{N_nl}{2},x_0 + \frac{N_nl}{2}]$ and the restricted Fourier space $[-\frac{L}{2},\frac{L}{2}]$ with $N_nN$ space steps so to have the real space nodes $\{ x_{nk}\}_{k=0}^{N_nN}$ and $\{ \nu_{nk}\}_{k=0}^{N_nN}$
\item Value $u_n(x_{nk}) = g(x_{nk})$
\item For any $i$ from $n-1$ to $0$
   
   \begin{enumerate}   
      
     \item For any j,  $1 < j \leq q+1$

\begin{enumerate}
\item Compute $\alpha$ and $\beta$ defining the transform of equation (\ref{eq:trans2}), such that
\begin{equation} \theta_{i+1}  = \tilde{ u}_{i+1} ^{\alpha,\beta}(., \alpha_{j,1}) \end{equation} and $\theta_{i+1}$ satisfies the boundary conditions of equations (\ref{eq:bound1}) and (\ref{eq:bound2}).
\item Compute $\theta_i(x_{ik})$ through equation (\ref{eq:numconv2p}) for $k=0,1,\ldots,N_iN$ with \begin{equation} \psi(\nu, x) = \phi_i(\nu, x) \end{equation} and retrieve the values $\tilde{u}_{i,j,k}$ with the appropriate correction.

\item Compute $\alpha$ and $\beta$ defining the transform of equation (\ref{eq:trans2}), such that
\begin{equation} \theta_{i+1}  = \tilde{ u}_{i+1} ^{\alpha,\beta}(., \beta_{j,1}) \end{equation} and $\theta_{i+1}$ satisfies the boundary conditions of equations (\ref{eq:bound1}) and (\ref{eq:bound2}).

\item Compute $\theta_i(x_{ik})$ through equation (\ref{eq:numconv2p}) for $k=0,1,\ldots,N_iN$ with \begin{equation} \psi(\nu, x) = \Phi_{i,j}(\nu, x) \end{equation} and retrieve the values $\dot{u}_{i,j,k}$ with the appropriate correction.
\item Compute the values $u_{i,j,k}$ as 
\begin{equation} u_{i,j,k} = \tilde{u}_{i,j,k} + \Delta_i  \sum_{s=2}^{j} \alpha_{js} f(t_{i,s}, x_{i,k}, u_{i,s,k}, \dot{u}_{i,s,k} ) \end{equation} for $k=0,1,\ldots,N_iN$ through equation (\ref{eq:uf}).
\item Update the real space grid with equation (\ref{eq:intergrid}) and the Fourier space grid by discretizing the interval $[-\frac{L}{2},\frac{L}{2}]$ with $N_iN$ space steps so to have the real space nodes $\{ x_{ik}\}_{k=0}^{N_iN}$ and $\{ \nu_{ik}\}_{k=0}^{N_iN}$.
\end{enumerate}

    \item Set $u_{i,k} = u_{i,q+1,k}$ and $ \dot{u}_{i,k} = \dot{u}_{i,q+1,k}$

\end{enumerate}
\end{enumerate}
\end{algo}

\subsection{Spatial discretization error analysis \label{sec:ErrorB}}

We denote by $\{ {u}_{i,j,k} \}_{k=0}^{N_iN}$ and $\{ {\dot{u}}_{i,j,k} \}_{k=0}^{N_iN}$
the intermediate numerical solutions obtained at time step $t_{i}$,
$i=0,1,\ldots,n-1$ and stage $j$, $1<j\leq q+1$, from the Fourier interpolation method on the alternative grid when using a $q-$stage Runge-Kutta
scheme. In addition, $\{ \mathbf{u}_{i,j,k} \}_{k=0}^{N_iN}$ and
$\{ \mathbf{\dot{u}}_{i,j,k} \}_{k=0}^{N_iN}$ are the intermediate
numerical solutions obtained at the intermediate stage $j$, $1<j\leq q+1$,
of time step $t_i$ given the exact solutions $u_{i+1}$ and ${\dot u}_{i+1}$
at $t_{i+1}$. We have from the notation previously used that
the numerical solutions at $t_{i}$ write \begin{align} {u}_{i,k} &= {u}_{i,q+1,k} \\ \dot{u}_{i,k} &= \dot{u}_{i,q+1,k} \end{align}
and are computed from the intermediate solutions $\{ \tilde{u}_{i,k} \}_{k=0}^{N_iN}$,
$0<i\leq n$ where $\tilde{u}_{n,k} = \tilde{u}_{n}(x_{n,k})$. When
the exact solutions $u_{i+1}$ and ${\dot u}_{i+1}$ are known at
$t_{i+1}$, we also write \begin{align} \mathbf{u}_{i,k} &= \mathbf{u}_{i,q+1,k} \\ \mathbf{ \dot{u}}_{i,k} &= \mathbf{ \dot{u}}_{i,q+1,k}. \end{align}
The local (space) discretization error has the form \begin{equation} { E}_{ik} := \norm{u_i(x_k) - \mathbf{u}_{i,k}} +  \norm{\dot{u}_i(x_k) - \mathbf{\dot{u}}_{i,k}} \label{eq:localDE2} \end{equation}for
$i=0,1,\ldots,n-1$ and $k=0,1,\ldots,N_{i}N$. The next theorem gives a description of the local (space) discretization
error bound.
\begin{thm}
\label{thm:ADLerr-2}Suppose that the driver $f\in\mathcal{C}^{1,2}([0,T]\times\mathbb{R}^{2})$
and the terminal condition $g\in\mathcal{C}^{2}(\mathbb{R})$ and Assumptions \ref{assump:coef} and \ref{assump:tdisc} 
are satisfied, then the Fourier interpolation method yields a local
space discretization error of the form \begin{equation} \sup_{i,k}{ E}_{ik} = \mathcal{O}\left(\Delta x \right) + \mathcal{O}\left( e^{- K \Delta_i^{-s_0}  l^{q_0}} \right) \label{eq:ADerrbound-2} \end{equation}
for some constant $K>0$ on the alternative grid and under the trapezoidal
quadrature rule for any explicit $q$-stage Runge-Kutta scheme.\end{thm}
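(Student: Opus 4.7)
The plan is to adapt the argument of Theorem \ref{thm:ADLerr} to the Runge-Kutta setting, proceeding by induction on the stage index $j$ from $2$ to $q+1$. First, as in the proof of Theorem \ref{thm:ADLerr}, I would use the transform theorems associated with equations (\ref{eq:rklemy})--(\ref{eq:rklemz}) (and their half-order analogue) to reduce the analysis to the case $\alpha=\beta=0$, since the correction terms are computed exactly at the grid points and contribute no interpolation error. The explicitness of the scheme is essential here: at stage $j$, equations (\ref{eq:yrk5}) and (\ref{eq:zrk5}) involve only the values $\{\tilde u_{i+1}(\cdot,\alpha_{j,1}),\tilde u_{i+1}(\cdot,\beta_{j,1})\}$ together with the already-computed stages $k<j$, so no fixed-point step is needed.

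For the base step (single convolution) and each inductive step, I would repeat the integral-splitting argument of Theorem \ref{thm:ADLerr}, now with $\phi_i$ and $\Phi_{i,j}$ in place of the Gaussian characteristic. Writing
\begin{equation*}
\mathbf{u}_{i,j,k}\,=\,\int_{|y|\le l/2} T_i(x_{ik}+y)\,h_i^{j}(y|x_{ik})\,dy\,+\,\mathrm{(tail)},
\end{equation*}
where $h_i^{j}$ is the measure whose characteristic function is $\phi_i$ (respectively $\Phi_{i,j}$ for the gradient), the restriction to $|y|\le l/2$ together with equation (\ref{eq:fint2}) produces the $\mathcal{O}(\Delta x)$ interpolation contribution, while extending the integral back to $\mathbb{R}$ and applying Fubini with the conditional characteristic function reproduces the DFT representation of equation (\ref{eq:numconv2p}) exactly as in Theorem \ref{thm:ADLerr}.

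The tail bound is the place where Assumption \ref{assump:tdisc}(4) is crucial and replaces the Gaussian Chernoff estimate of Theorem \ref{thm:ADLerr}. Applying the Cauchy--Schwarz inequality to $\int_{|y|>l/2}\tilde u_{i+1}(x_{ik}+y,\cdot)h_i^{j}(y|x_{ik})\,dy$ and using the exponential-moment bound
\begin{equation*}
\max\!\left(\inf_{s\in\mathbb{R}_+^d}e^{-s^{*}t}\phi_i(\mathbf{i}s,x),\ \inf_{s\in\mathbb{R}_+^d}e^{-s^{*}t}\phi_i(-\mathbf{i}s,x)\right)\le e^{-K_0\Delta_i^{-s_0}|t|^{q_0}}
\end{equation*}
to perform a Chernoff-type deviation estimate at radius $l/2$ gives the exponential tail $\mathcal{O}(e^{-K\Delta_i^{-s_0}l^{q_0}})$. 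The same bound handles the tail of $\Phi_{i,j}$, using the duality representation (\ref{eq:zrk51}) so that $\Phi_{i,j}$ inherits exponential decay from $\phi_i$ up to a polynomial factor $\mathbf{i}\nu$ absorbed into a slightly smaller constant $K$.

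Finally, I would close the induction on $j$ using the Lipschitz property of $f$: since $\tilde u_{i+1}(\cdot,\alpha_{j,1})=u_{i+1}+\Delta_i\alpha_{j,1}f(t_{i+1},\cdot,u_{i+1},\dot u_{i+1})$ is again $\mathcal{C}^{2}$ in $x$ (as $u_{i+1}\in\mathcal{C}^{2}$, $f\in\mathcal{C}^{1,2}$, $g\in\mathcal{C}^{2}$), the interpolation hypothesis of Theorem \ref{thm:ADLerr} applies at each stage; and the stage-$j$ correction $\Delta_i\sum_{k=2}^{j}\alpha_{jk}f(t_{i,k},x_{ik},\mathbf{u}_{i,k,k},\mathbf{\dot u}_{i,k,k})$ is Lipschitz in its arguments, so an error of size $\mathcal{O}(\Delta x)+\mathcal{O}(e^{-K\Delta_i^{-s_0}l^{q_0}})$ at stages $<j$ produces the same order at stage $j$, with constants enlarged by factors depending on $(q,\{\alpha_{jk}\},\Delta_i K)$. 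Taking the supremum over $i,k$ and the (finite) stage index $j$ gives (\ref{eq:ADerrbound-2}). The main obstacle is verifying that Assumption \ref{assump:tdisc}(4) produces the advertised stretched-exponential tail uniformly in $x$ and for both $\phi_i$ and $\Phi_{i,j}$; once that deviation estimate is in place, the rest of the argument is a mechanical repetition of Theorem \ref{thm:ADLerr} plus induction on the finitely many explicit stages.
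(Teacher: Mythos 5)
Your proposal follows essentially the same route as the paper: reduce to the untransformed case, repeat the integral-splitting/Fourier-interpolation argument of Theorem \ref{thm:ADLerr} with $\phi_i$ and $\Phi_{i,j}$ in place of the Gaussian characteristic, obtain the stretched-exponential tail from Assumption \ref{assump:tdisc}(4) via a Chernoff estimate, and close by recursion over the finitely many explicit stages using the Lipschitz property of $f$. The only cosmetic difference is in absorbing the stochastic coefficient $H^{\varphi_j}$ into the tail bound: the paper applies the Cauchy--Schwarz inequality twice and uses the Gaussianity of $H^{\varphi_j}$ to extract a factor $\Delta_i^{-1/2}$ that is then swallowed by the exponential, whereas you argue through the duality representation of $\Phi_{i,j}$; both yield the same bound.
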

\begin{proof}
We follow the steps in the proof of Theorem \ref{thm:ADLerr}. The truncation error when computing the numerical solutions $\mathbf{\dot{u}}_{i,j,k}$
is\begin{align*} 
\lefteqn{ \CEsp{x_{ik} }{H^{\varphi_j}_{t_{i,j}, \gamma_j \Delta_i} \tilde{u}_{i+1}( t_{i+1}, X^{\pi}_{t_{i+1}} ; \beta_{j,1}) {\bf 1}_{\norm{\Delta X^{\pi}_i} > \frac{l}{2}} }{t_i} } \\
& < K \CEsp{x_{ik} }{ \left( H^{\varphi_j}_{t_{i,j}, \gamma_j \Delta_i} \right)^4 }{t_i}^{\frac{1}{4}}  \CEsp{x_{ik} }{ {\bf 1}_{\norm{\Delta X^{\pi}_i} > \frac{l}{2}} }{t_i} ^{\frac{1}{4}} \tag{using Cauchy-Schwarz inequality twice since $\tilde{u}_{i+1}( t_{i+1}, X^{\pi}_{t_{i+1}} ;.)$ is sq.\ int.} \\
& < K \Delta_i^{-\frac{1}{2}}\CEsp{x_{ik} }{  {\bf 1}_{\norm{\Delta X^{\pi}_i} > \frac{l}{2}} }{t_i}^{\frac{1}{4}} \tag{since $H^{\varphi_j}_{t_{i,j}, \gamma_j \Delta_i}$ is of Gaussian distribution} \\
& \leq K \Delta_i^{-\frac{1}{2}} \left( \inf_{s>0} e^{ -s \frac{l}{2} } \phi_i(- {\bf i}s )  + \inf_{s>0} e^{ -s \frac{l}{2} } \phi_i( {\bf i}s ) \right)^{\frac{1}{4}}
\tag{by Chernoff's inequality}\\
& < K \Delta_i^{-\frac{1}{2}}   e^{ -K_0 \Delta_i^{-s_0} l^{q_0}} \tag{by Assumptions \ref{assump:tdisc}}\\
& < K e^{-C \Delta_i^{-s_0} l^{q_0} }.
\end{align*}
The Fourier interpolation leads
to a first-order space discretization error when computing
the numerical solutions $\mathbf{\dot{u}}_{i,j,k}$ since the driver $f$ and the terminal condition $g$ are twice differentiable.

The same statements hold for the numerical solutions $\mathbf{u}_{i,2,k}$
using similar arguments. By recursion and using the Lipschitz property
of the driver $f$, the statements hold for $\mathbf{u}_{i,j,k}$,
$1<j\leq q+1$. Since the time step $t_{i}$ and the space node $x_{ik}$
are arbitrary, the space truncation and discretization error bounds
hold for any $i$ and $k$.
\end{proof}

Locally, the truncation error remains spectral. Nonetheless, it is of a unspecified index $q_0$ in this general setting where the conditional characteristic
function $\phi_{i}$ is itself unspecified. For higher order time discretizations, one can expect $q_0 \leq 2$ since the forward process increment $X^{\pi}_{t_{i+1}} - X^{\pi}_{t_{i}}$ has a heavy tail distribution. Indeed, the Gaussian distribution of forward process increments and the quadratic exponential form of their
characteristic functions were the main reason for the spectral convergence
of index $2$ of the truncation error in Section \ref{sec:ErrorA}. The space discretization error
though is unchanged with first-order due to the second-order differentiability
of the BSDE coefficients. However, the Fourier interpolation produces
a space discretization error with a higher order when the driver $f$
and the terminal function $g$ have the required smoothness.
In general, if $f\in\mathcal{C}_{b}^{m+1}$ and $g\in\mathcal{C}_{b}^{m+1}$,
we can expect a space discretization error of order $m$ which is
the convergence order of the underlying Fourier interpolation. 

We now turn to the global space discretization error defined as in equation (\ref{eq:globdiser}). The next theorem gives
its error bound.
\begin{thm}
\label{thm:GDR-2}Suppose the conditions of Theorem \ref{thm:ADLerr-2}
are satisfied. If the discretization is such that \begin{equation} \sup_{ i }  \left\{\frac{C_0 \Delta x}{{ \pi \Delta_i^{p_0}}}  \right\}  \leq 1    \label{eq:ccond2} \end{equation}then
the Fourier interpolation method is stable and yields a global discretization
error $E_{l,\Delta x}$ of the form \begin{equation} E_{l,\Delta x} = \mathcal{O}(\Delta x ) + \mathcal{O}\left(e^{-K \norm{\pi}^{-s_0} l^{q_0} }\right) \label{eq:GDR2} \end{equation}
where $K>0$ for any explicit $q$-stage Runge-Kutta scheme.\end{thm}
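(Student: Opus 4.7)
The plan is to mirror the argument of Theorem \ref{thm:GDR}, now adapted to the multi-stage Runge-Kutta structure and to the generalized characteristic-function bounds of Assumption \ref{assump:tdisc}. First I would split the error at stage $j$ of time step $t_i$ into a local discretization contribution $E_{i,k}$, supplied by Theorem \ref{thm:ADLerr-2}, plus a propagation term carrying the error already present at $t_{i+1}$:
\begin{align*}
\norm{u_{i,j}(x_{ik}) - u_{i,j,k}} &\leq E_{i,k} + \norm{\mathbf{u}_{i,j,k} - u_{i,j,k}},\\
\norm{\dot{u}_{i,j}(x_{ik}) - \dot{u}_{i,j,k}} &\leq E_{i,k} + \norm{\mathbf{\dot{u}}_{i,j,k} - \dot{u}_{i,j,k}}.
\end{align*}
The propagation terms would be handled by the matrix-vector representation of the inverse discrete Fourier transform, exactly as in equations (\ref{eq:2prf2}) and (\ref{eq:2prf3}), but with the Gaussian $L^1$-integrals replaced by the generic bound
\begin{equation*}
\int_{\mathbb{R}} \norm{\phi_i(\nu,x)}\,d\nu + \max_{1 < j \leq q+1}\int_{\mathbb{R}} \norm{\Phi_{i,j}(\nu,x)}\,d\nu \;\leq\; C_0 \Delta_i^{-p_0}
\end{equation*}
from Assumption \ref{assump:tdisc}. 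This yields propagation coefficients of the order $C_0 \Delta x/(\pi \Delta_i^{p_0})$, which the hypothesis (\ref{eq:ccond2}) forces to be at most one.

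Next I would perform a finite induction on the stage index $j = 2, \ldots, q+1$ within a single time step. The expressions (\ref{eq:yrk5}) and (\ref{eq:zrk5}) show that $u_{i,j}$ and $\dot{u}_{i,j}$ depend on the previously computed stage values $\{u_{i,k},\dot{u}_{i,k}\}_{k<j}$ only through a finite Lipschitz combination of the driver $f$, weighted by the constants $\alpha_{jk}$ and $\beta_{jk}$. Applying the Lipschitz estimate of $f$ at each iterate and absorbing the finite-stage constants, the induction would yield an inequality of the form
\begin{equation*}
\sup_k \bigl(\norm{u_{i,q+1}(x_{ik}) - u_{i,q+1,k}} + \norm{\dot{u}_{i,q+1}(x_{ik}) - \dot{u}_{i,q+1,k}}\bigr) \leq C_1 \sup_{i,k} E_{i,k} + (1 + C_2 \Delta_i) \sup_k e_{i-1,k},
\end{equation*}
which is the Runge-Kutta analogue of inequality (\ref{eq:2convprf}). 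The discrete Gronwall argument then proceeds as in Theorem \ref{thm:GDR}, since $e_{0,k} = \dot{e}_{0,k} = 0$ and $\sum_i \Delta_i = T$.

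Finally, I would combine the Gronwall conclusion with the local error estimate from Theorem \ref{thm:ADLerr-2}; replacing $\Delta_i$ by $\norm{\pi}$ in the exponential truncation term gives $\sup_{i,k} E_{i,k} = \mathcal{O}(\Delta x) + \mathcal{O}(e^{-K \norm{\pi}^{-s_0} l^{q_0}})$, from which (\ref{eq:GDR2}) follows immediately.

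The main obstacle I anticipate is the stage induction: unlike the single-stage Euler case, each intermediate solution $u_{i,j}$ couples back through $f$ to the already approximated stages $u_{i,k}, \dot{u}_{i,k}$ with $k < j$ of the same time step. Showing that these intra-step couplings contribute only an $\mathcal{O}(\Delta_i)$ perturbation to the propagation estimate, rather than amplifying the error geometrically in $q$, requires careful bookkeeping of the $\alpha_{jk}, \beta_{jk}$ coefficients together with repeated application of the Lipschitz bound on $f$. Once this bookkeeping is handled, the time-direction Gronwall step proceeds essentially verbatim from Theorem \ref{thm:GDR}, and the integrability bound from Assumption \ref{assump:tdisc} supplies the correct stability threshold (\ref{eq:ccond2}).
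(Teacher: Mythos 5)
Your proposal follows essentially the same route as the paper's proof: the same split into local error plus propagation, the same use of the $L^1$ bound $\int \norm{\phi_i}\,d\nu + \max_j \int \norm{\Phi_{i,j}}\,d\nu \leq C_0\Delta_i^{-p_0}$ from Assumption \ref{assump:tdisc} via the matrix-vector form of the DFT, the same finite stage recursion controlled by the Lipschitz property of $f$ and the boundedness of the Runge-Kutta coefficients, and the same Gronwall step in time combined with Theorem \ref{thm:ADLerr-2}. The intra-step coupling you flag as the main obstacle is benign for exactly the reason you suggest (the scheme is explicit, so each stage adds only an $\mathcal{O}(\Delta_i)$ Lipschitz perturbation over finitely many stages), which is how the paper disposes of it as well.
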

\begin{proof}
From the definition of the global space discretization error, we may
write \begin{align}
e_{ik} & \leq { E}_{n-i,k} + \norm{ \mathbf{{u}}_{n-i,k} - {u}_{n-i,k}}  \label{eq:rkproof1}\\
\dot{e}_{ik} & \leq {E}_{n-i,k} + \norm{ \mathbf{\dot{u}}_{n-i,k} - {\dot{u}}_{n-i,k}}. \label{eq:rkproof11}
\end{align}Assume that the boundary values of the function $\tilde{u}_{i+1}$
and the sequence $\tilde{u}_{i+1,s}$ are matched on the alternative
grid so that we don't have to treat the alternative transform. Under
an explicit $q-$stage Runge-Kutta scheme, we have
\begin{align}
  &\norm{ \mathbf{\dot{u}}_{i,j,k} - {\dot{u}}_{i,j,k}}
  = \norm{ \mathfrak{D}^{-1}\left[ \left\{ \Phi_{i,j}(\nu_{i+1,m}, x_{ik}) \mathbb{D}[ \tilde{u}_{i+1} - \tilde{u}_{i+1,s} ]_m     \right\}_{m=0}^{N_{i+1}N-1} \right]_{k+\frac{N}{2}} }   \nonumber \\
& \leq \frac{ \sum_{m=0}^{N_{i+1}N-1} \norm{\Phi_{i,j}(\nu_{i+1,m}, x_{ik} )} }{N_{i+1}N}  \sup_k \norm{ \tilde{u}_{i+1}(x_{ik},\beta_{1,j}) - \tilde{u}_{i+1,k} } \nonumber \\
& \leq \frac{\Delta x}{2 \pi}  \left(\int_{\mathbb{R}} \norm{\Phi_{i,j}(\nu,x_{i,k})} d \nu \right) \sup_k \norm{ \tilde{u}_{i+1}(x_{ik} ,\beta_{1,j}) - \tilde{u}_{i+1,k} } \nonumber \\
& \leq \frac{C_0 \Delta x}{2 \pi \Delta_i^{p_0}} \sup_k \norm{ \tilde{u}_{i+1}(x_{ik},\beta_{1,j}) - \tilde{u}_{i+1,k} } \tag{using Assumption \ref{assump:tdisc}} \nonumber \\
& \leq \frac{C_0 \Delta x}{2 \pi \Delta_i^{p_0}}  (1 + \Delta_i K) \sup_k e_{n-i-1,k} + \frac{C_0 \Delta x}{2 \pi \Delta_i^{p_0}}  \Delta_i K \sup_k \dot{e}_{n-i-1,k} \tag{since $f$ is Lipschitz and $\beta_{1,j}$ is bounded} \nonumber \\
& \leq \frac{C_0 \Delta x}{2 \pi \Delta_i^{p_0}}  (1 + \Delta_i K) \sup_k e_{n-i-1,k} + \frac{C_0 \Delta x}{2 \pi \Delta_i^{p_0}}  (1 + \Delta_i K) \sup_k \dot{e}_{n-i-1,k}.\label{eq:rkproofF1}
\end{align}
Similarly, we get
\begin{align*}
&  \norm{ \mathbf{{u}}_{i,2,k} - {{u}}_{i,2,k}}  \leq \norm{ \mathfrak{D}^{-1}\left[ \left\{ \phi_{i}(\nu_{i+1,m}, x_{ik}) \mathbb{D}[ \tilde{u}_{i+1} - \tilde{u}_{i+1,s} ]_m     \right\}_{m=0}^{N_{i+1}N-1} \right]_{k+\frac{N}{2}} }   \\
& \leq \frac{\Delta x}{2 \pi}  \left(\int_{\mathbb{R}} \norm{\phi_{i}(\nu,x_{i,k})} d \nu \right) \sup_k \norm{ \tilde{u}_{i+1}(x_{ik} ,\alpha_{1,2}) - \tilde{u}_{i+1,k} } \\
& \leq \frac{C_0 \Delta x}{2 \pi \Delta_i^{p_0}} \sup_k \norm{ \tilde{u}_{i+1}(x_{ik},\alpha_{1,2}) - \tilde{u}_{i+1,k} } \tag{using Assumption \ref{assump:tdisc}} \\
& \leq \frac{C_0 \Delta x}{2 \pi \Delta_i^{p_0}}  (1 + \Delta_i K) \sup_k e_{n-i-1,k} + \frac{C_0 \Delta x}{2 \pi \Delta_i^{p_0}}  (1 + \Delta_i K) \sup_k \dot{e}_{n-i-1,k}
\end{align*}
so that we get
\begin{align}
  \norm{ \mathbf{{u}}_{i,j,k} - {{u}}_{i,j,k}}
& \leq \frac{C_0 \Delta x}{2 \pi \Delta_i^{p_0}}  (1 + \Delta_i K)\left[\sup_k e_{n-i-1,k} + 2 \sup_k \dot{e}_{n-i-1,k} \right]
  \label{eq:rkproofF}
\end{align}
recursively for $1<j\leq q+1$ using the Lipschitz property of the driver $f$ and the boundedness of the Runge-Kutta coefficients.
Equations (\ref{eq:rkproof1}) and (\ref{eq:rkproof11}) combined with equations
(\ref{eq:rkproofF}) and (\ref{eq:rkproofF1}) lead to
\begin{align}
  \sup_k e_{i,k} + \sup_k \dot{e}_{i,k}
  & \leq 2 \sup_{i,k}E_{ik}  + \frac{C_0 \Delta x}{ \pi \Delta_i^{p_0}}  (1 + \Delta_{n-i} K)\left( \sup_k e_{i-1,k} + \sup_k \dot{e}_{i-1,k}  \right) \nonumber\\
  & \leq 2 \sup_{i,k}E_{ik} + \zeta  (1 + \Delta_{n-i} K)\left( \sup_k e_{i-1,k} + \sup_k \dot{e}_{i-1,k}  \right) \nonumber
\end{align}
where
\begin{equation*}
  \sup_i \left\{\frac{C_0 \Delta x}{{ \pi \Delta_i^{p_0}}}  \right\} \leq \zeta \leq 1.
\end{equation*}
Gronwall's Lemma then yields
\begin{equation}
  \sup_k e_{i,k} + \sup_k \dot{e}_{i,k} \leq 2 e^{TK} \sup_{i,k}E_{ik}  \label{eq:rkproof2}
\end{equation}
so that the scheme is stable. The result of equation (\ref{eq:GDR2})
follows by taking the supremum on the left hand side of equation (\ref{eq:rkproof2})
other time steps and applying Theorem \ref{thm:ADLerr-2}.
\end{proof}
In this general case, the global discretization error maintains the
structure of the local discretization error under a stability condition.
Equation (\ref{eq:ccond2}) indicates that the space discretization
has to be relatively as fine as the time discretization to ensure
stability. Hence, stability can always be reached for any time discretization
by refining the space discretization. However, the structure of the
characteristic functions $\phi_{i}$ and $\Phi_{ij}$ determines the
relative refinement needed for the space discretization.

\section{Numerical Results\label{sec:NumResults}}

We test the convergence properties of the Fourier interpolation method
on Runge-Kutta schemes with a problem of commodity derivative pricing
under a model proposed by \cite{luciaschwartz:2002}.
We shall test the method's convergence and behaviour on smooth and
unbounded FBSDE coefficients.

The commodity spot price $X$ is defined by \begin{equation} X_t = e^{ S(t) + V_t} \end{equation}
where the deterministic function $S:\mathbb{R}^{+}\rightarrow\mathbb{R}$
represents the seasonality component of the commodity and $V$ is
the price diffusion following an Ornstein-Uhlenbeck process according
to the \citet{vasicek:1977} model 
\begin{equation}  
dV_t = -\kappa V_tdt + \sigma dW_t.  
\end{equation}
As indicated by \cite{luciaschwartz:2002}, the
commodity spot price $X$ satisfies the stochastic differential equation
\begin{equation} dX_t = \kappa( \theta(t) - \ln X_t ) X_t dt + \sigma X_t dW_t \label{eq:comprice} \end{equation}
where \begin{equation} \theta(t) = \frac{1}{\kappa}\left( \frac{\sigma^2}{2} + \frac{dS}{dt}(t) \right) + S(t). \end{equation}
We consider the commodity price as our forward process through equation
(\ref{eq:comprice}).

When the risk-free rate $r$ and the market price of risk $\lambda$
are both constant, the forward (or future) price
$F_{t,T}:=Y_{t}=u(t,X_{t})$
with maturity $T>0$ at time $t<T$ is given by
\begin{align}
  Y_t &= \CEsp{ {\bf Q} }{X_T}{t} \nonumber \\
       &= \exp \left( S(T) +  (\ln X_t - S(t)) e^{-\kappa (T-t)} - \frac{\sigma \lambda}{\kappa}h(T-t,\kappa)  + \frac{\sigma^2}{4\kappa}h(T-t,2\kappa) \right) \label{eq:fwdp}
\end{align}
with \begin{equation}
  h(\tau,\kappa) = 1 - e^{-\kappa \tau}
\end{equation}
where the expectation is taken under the equivalent risk measure ${\bf Q}$.
It can be shown that the forward price solves a BSDE with linear driver
\begin{equation}
  f(t,x,y,z) = - \lambda z
\end{equation} and terminal
condition
\begin{equation}
  g(x) = x.
\end{equation}
Options on forward contracts can also be represented in form of BSDEs in this spot price
model but we limit our analysis to forward price estimation. From
equation (\ref{eq:fwdp}) the control process (or equivalently the
forward price delta) is given by
\begin{align}
  Z_t &= \sigma X_t \nabla u(t,X_t) \nonumber \\
  &= \sigma e^{-\kappa (T-t)} u(t,X_t).
\end{align}

The adjustment speed of the diffusion process is $\kappa=1.5$ and
the volatility of the diffusion is set to be $\sigma=0.065$. The
seasonality component is given by \begin{equation} S(t) = \ln {\bar P} + 0.05 \sin(2\pi t) \end{equation}
and the initial spot price by \begin{equation} X_0 = {\bar P} e^{V_0} = 0.95 {\bar P} \end{equation} 
where we normalize the real value%
\footnote{The real value $\bar{P}$ can be considered as the production cost
(per unit) of the commodity.%
} of the commodity $\bar{P}=1$. Also, the maturity of the forward
contract is $T=0.25$ and we suppose a market price of risk of $\lambda=0.25$.

The FBSDE is solved on an alternative grid centred at $X_{0}$ with
a uniform time mesh. For a given number of time steps $n$ and the
initial number $N_{0}=1$ of intervals, the length of an increment
interval is set as \begin{equation} l = \frac{1.8}{N_0 + n} \end{equation}
so that the truncated interval at time $t_{n}$ has length $1.8$.
This restriction keeps the space nodes in the upper half plane knowing
that the commodity price is a positive process. Moreover, the number
of space steps on an increment interval is $N=2$. 

We numerically solve the BSDE with the explicit $1-$stage Runge-Kutta
scheme of half-order and an explicit $2-$stage Runge-Kutta scheme
of first-order. Under the explicit $1-$stage scheme, the commodity
price is discretized with an Euler scheme whereas a Milstein scheme
is used for the forward process $X$ under the explicit $2-$stage
Runge-Kutta scheme. In
addition, we use an explicit $2-$stage Runge-Kutta scheme with tableau
\begin{center} 
\begin{tabular}{c| c c c| c c}
$0$    & $0$ & $0$    & $0$         & $0$ & $0$  \\
$\frac{2}{3}$ & $ \frac{2}{3} $ & $0$       & $0$ & $\frac{2}{3}$ & $0$    \\ \hline
$1$ & $ \frac{1}{4}$ & $\frac{3}{4}$ & $0$ & $1$ & $0$ \\
\end{tabular} 

\end{center} 

Under both FBSDE discretizations, we compute two different types of
error. The first error $E_{True}$ evaluates the maximal absolute
error of the numerical solution with respect to the true solution
\begin{align}
  E_{True} &= \max_{0 \leq i < n} \max_{ 0 \leq k \leq NN_i} \norm{ u(t_i, x_{ik}) - u_{ik} } 
  +  \max_{0 \leq i < n} \max_{ 0 \leq k \leq NN_i} \norm{ \dot{ u}(t_i, x_{ik}) - \dot{u}_{ik} }
\end{align}
where
\begin{equation}
  \dot{ u}(t,x) = \sigma x \nabla u(t,x) = \sigma e^{-\kappa (T-t)}u(t,x).
\end{equation}
The second error $E_{Sim}$ is a simulation error.
Given the numerical solution $\{X_{t_{i},j}^{\pi}\}_{j=1}^{m}$ ,
$i=0,1,\ldots,n-1$ with $m>0$ simulated paths for the forward process,
we compute the numerical solution $\{(y_{t_{i},j},z_{t_{i},j})\}_{j=1}^{m}$
of the backward processes by linearly interpolating the simulated paths through the BSDE numerical solutions $ \{ u_{ik}\}_{k=0}^{N_iN}$ and $ \{ \dot{u}_{ik}\}_{k=0}^{N_iN}$ at each time step $t_i$.
The error $E_{Sim}$ can be written as
\begin{align*}
  E_{Sim } &=  \frac{1}{m} \sum_{j=1}^{m} \left[ \max_{0 \leq i < n} \norm{ u(t_i, X^{\pi}_{t_i , j}) - y_{t_i,j} } 
  +  \left( \sum_{i = 0}^{n - 1} \Delta_i   ( \dot{u}(t_i, X^{\pi}_{t_i , j}) - z_{t_i,j})^2 \right)^{\frac{1}{2}}\right] .
\end{align*}
We systematically use $m=1000$ paths. Even if the errors $E_{True}$
and $E_{Sim}$ may be of the same order, they are interpreted differently.
The error $E_{True}$ gives the behaviour of the maximal approximation
error on the grid whereas $E_{Sim}$ gives the behaviour of the error
on the relevant part of grid when solving the FBSDE numerically. Figure
\ref{fig:1qRK} displays the errors under the explicit $1-$stage
Runge-Kutta scheme with $n\in\{5,10,20,50,100\}$ and Figure \ref{fig:2qRK}
shows the errors under the explicit $2-$stage scheme. 

\begin{figure}[p]
\begin{center}
\caption{Log-log plot of errors using the $1$-stage Runge-Kutta scheme.} 
\includegraphics[scale = 0.6]{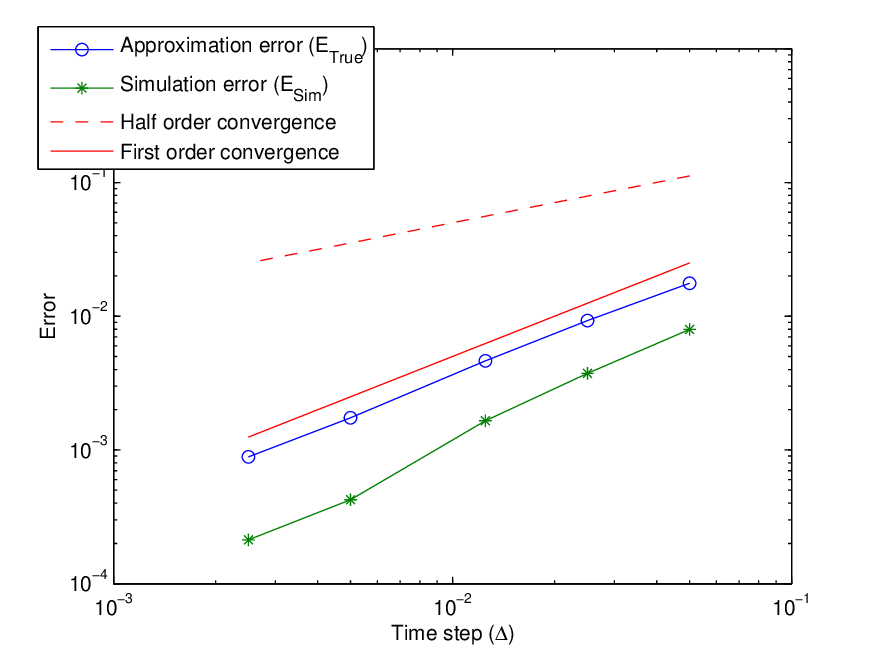} 
\label{fig:1qRK}

\footnotesize{The sample standard deviation of the error $E_{Sim}$ was less than $2\times 10^{-6}$ for all time discretizations.}

\end{center}
\end{figure}

\begin{figure}[p]
\begin{center}
\caption{Log-log plot of errors using the $2$-stage Runge-Kutta scheme.} 
\includegraphics[scale = 0.6]{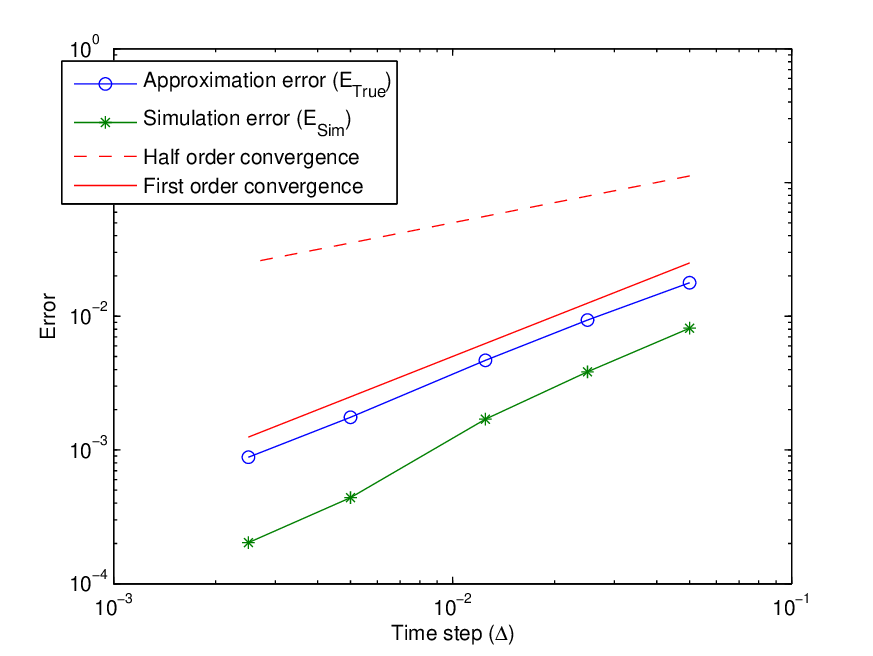} 
\label{fig:2qRK}

\footnotesize{The sample standard deviation of the error $E_{Sim}$ was less than $2\times 10^{-6}$ for all time discretizations.}

\end{center}
\end{figure}

The error graphs of Figures \ref{fig:1qRK} and \ref{fig:2qRK} look
almost identical and confirm that the $2-$stage scheme is of first
order and the $1-$stage scheme of (at least) half-order. The extra-efficiency
of the $1-$stage scheme may be attributed in this particular case
to the simplicity of the driver $f$ and the terminal condition $g$.
\begin{figure}[p]
\begin{center}
\caption{Simulation errors using the $2$-stage Runge-Kutta scheme.} 
\includegraphics[scale = 0.6]{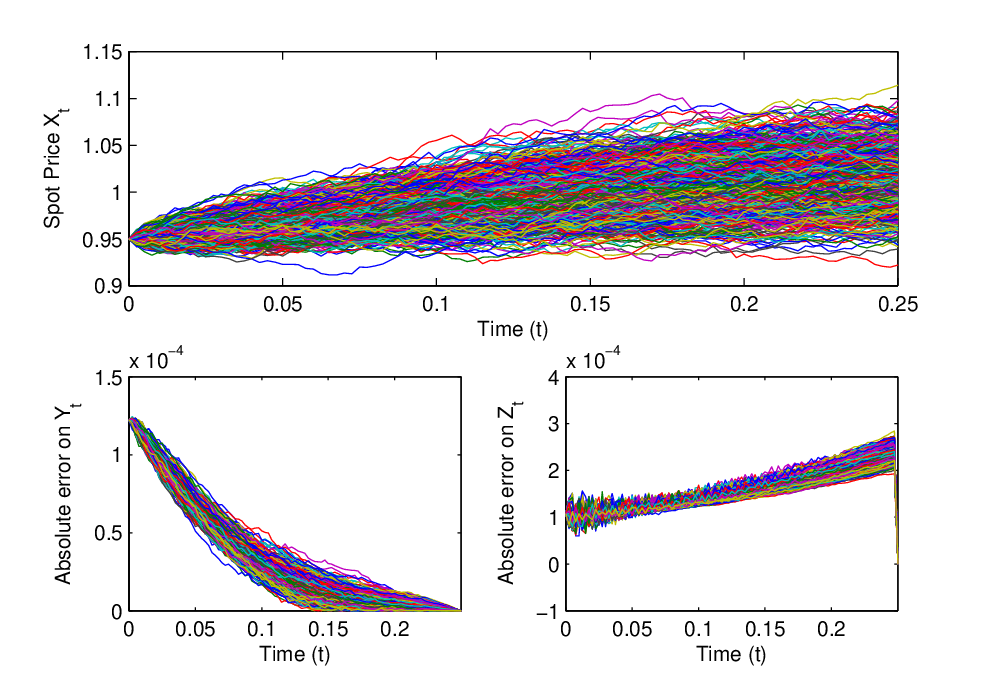} 
\label{fig:2qRKsim}

\footnotesize{The numerical solution is obtained on a time mesh with $n=100$ time steps and returns an forward price of $1.0121$ and initial value of $0.0453$ for the control process. The exact values are $1.0123$ and $0.0452$ respectively.}

\end{center}
\end{figure}\begin{figure}[p]
\begin{center}
\caption{Contour plot of errors using the $2$-stage Runge-Kutta scheme.} 
\includegraphics[scale = 0.6]{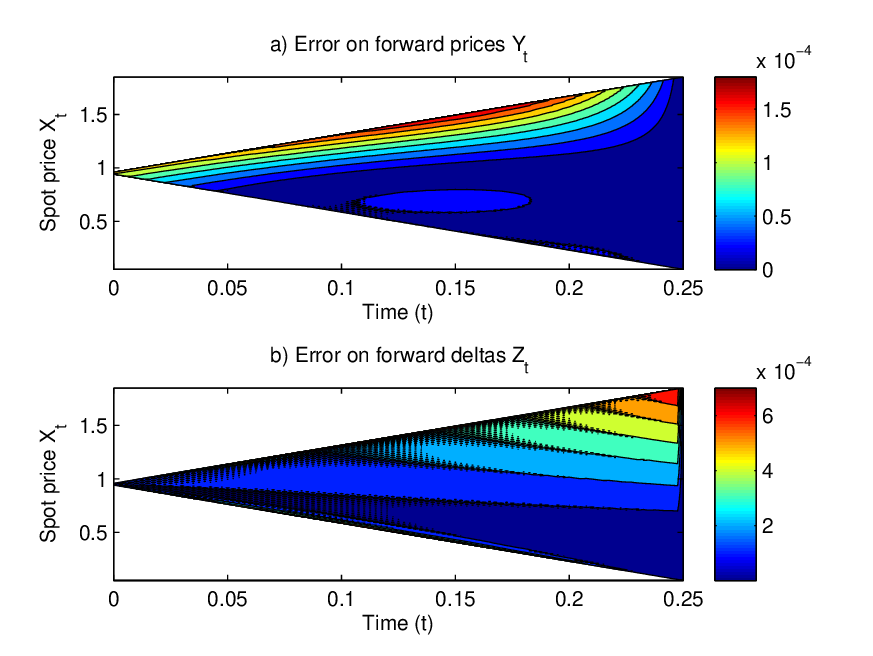} 
\label{fig:2qRKcontp}

\footnotesize{The numerical solution is obtained on a time mesh with $n=100$ time steps and returns an forward price of $1.0121$ and initial value of $0.0453$ for the control process. The exact values are $1.0123$ and $0.0452$ respectively.}

\end{center}
\end{figure}

In Figure \ref{fig:2qRKsim}, we present the absolute errors along
the simulated paths for the BSDE solution. One notices that the maximal
errors occur at the initial time $t_{0}=0$ for the forward price
($Y_{t}$) and at maturity $T=0.25$ for the control process ($Z_{t}$).
Nonetheless, the simulation errors are of the same order ($10^{-4}$)
for both processes. This information is confirmed by the contour plot
of Figure \ref{fig:2qRKcontp} not only along the simulated paths
but on the entire grid. 

Moreover, the contour plot gives indication on the source of errors.
Indeed, Figure \ref{fig:2qRKcontp} shows that the maximal errors
mainly occur for the upper space node values on the alternative grid
and they decrease for lower space node values. This is due to the
unbounded nature of the spot price process coefficients. Since the
volatility of the spot price is a positive and increasing function
of the spot price%
\footnote{See equation (\ref{eq:comprice}).%
}, higher spot price values lead to higher local volatility. Hence,
the fixed length of increment interval $l$ may not be sufficiently
large to ensure accuracy for higher space node values. In general,
the phenomenon is amplified with the magnitude of the forward process
coefficients as illustrated in the contour plot of Figure \ref{fig:2qRKconts}
where we choose a higher value for the volatility $\sigma$ and keep
the other parameters unchanged. Similar results can be obtained by
selecting a higher value for the speed of adjustment $\kappa$ as
shown in Figure \ref{fig:2qRKcontk} . \begin{figure}[p]
\begin{center}
\caption{Errors using the $2$-stage Runge-Kutta scheme with $\sigma = 0.08$.} 
\includegraphics[scale = 0.6]{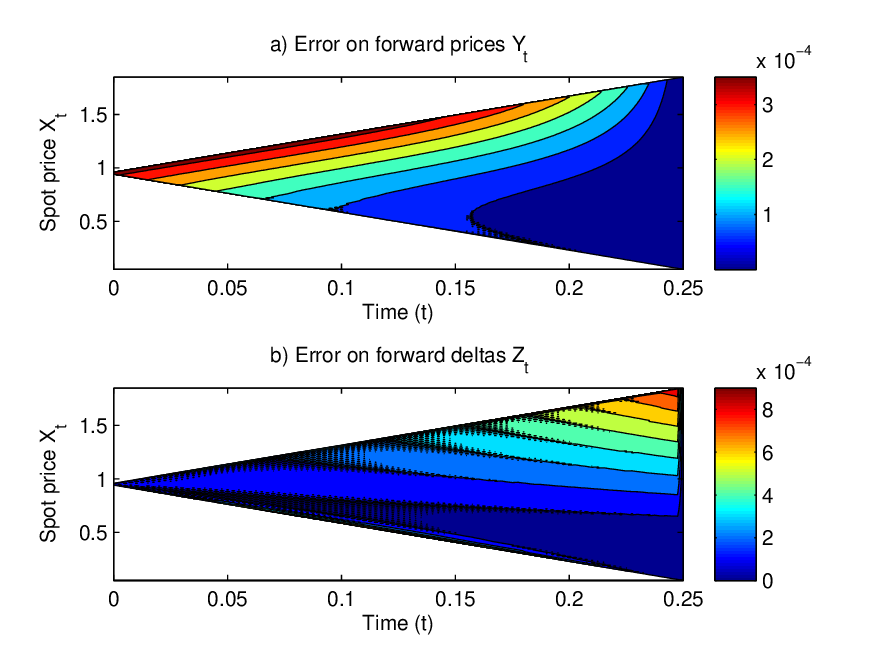} 
\label{fig:2qRKconts}

\footnotesize{The numerical solution is obtained on a time mesh with $n=100$ time steps and returns an forward price of $1.0115$ and initial value of $0.0558$ for the control process. The exact values are $1.0119$ and $0.0556$ respectively.}

\end{center}
\end{figure}\begin{figure}[p]
\begin{center}
\caption{Errors using the $2$-stage Runge-Kutta scheme with $\kappa = 3$.} 
\includegraphics[scale = 0.6]{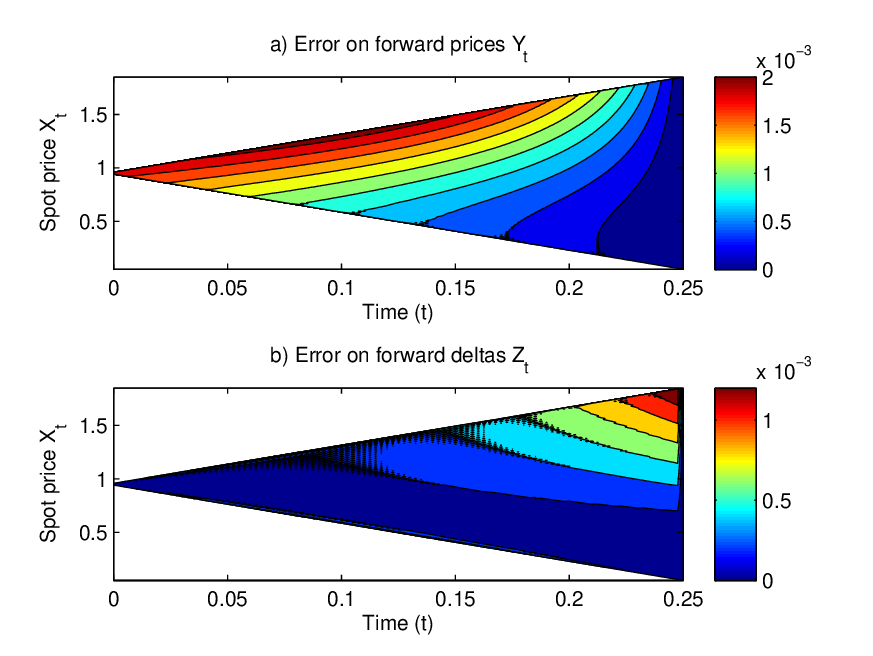} 
\label{fig:2qRKcontk}

\footnotesize{The numerical solution is obtained on a time mesh with $n=100$ time steps and returns an forward price of $1.0238$ and initial value of $0.0316$ for the control process. The exact values are $1.0257$ and $0.0315$ respectively.}

\end{center}
\end{figure}

We end this section with an efficiency study of our schemes. Using
the parameters initially given, the BSDE is solved on a uniform time
grid with $n\in\{10,20,40,50,60,80,100\}$ time steps and $N\in\{2,2^{2},2^{3},2^{4}\}$
space steps and value the computation time. Figure \ref{fig:cpu}
displays the results. First note that since the Fourier interpolation
method performs matrix multiplications,
it is much slower than the convolution method of \cite{hyndmanoyonongou:2013}.

As shown in Figure \ref{fig:cpu}, the computation time of Fourier
interpolation method increases with the number of time steps leading
to a trade-off between computation speed and accuracy. The exponential
nature of the curves suggests that preference has to be given to the
coarsest time discretization providing a satisfactory level of accuracy.
Similarly, the computation time also increase drastically with the
number $N$ of space steps. Coarse space grids insuring accuracy are
hence also preferable. Since a total number of $2q$ conditional expectations
are computed under a $q$-stage Runge-Kutta scheme, we can expect
the $1$-stage scheme to run twice as fast as the $2$-stage scheme.
This is confirmed on Figure \ref{fig:cpu}, especially when looking
at the computation times for $n=100$.

\begin{figure}
\begin{center}
\caption{CPU time (in seconds) of Runge-Kutta schemes.} 
\includegraphics[scale = 0.6]{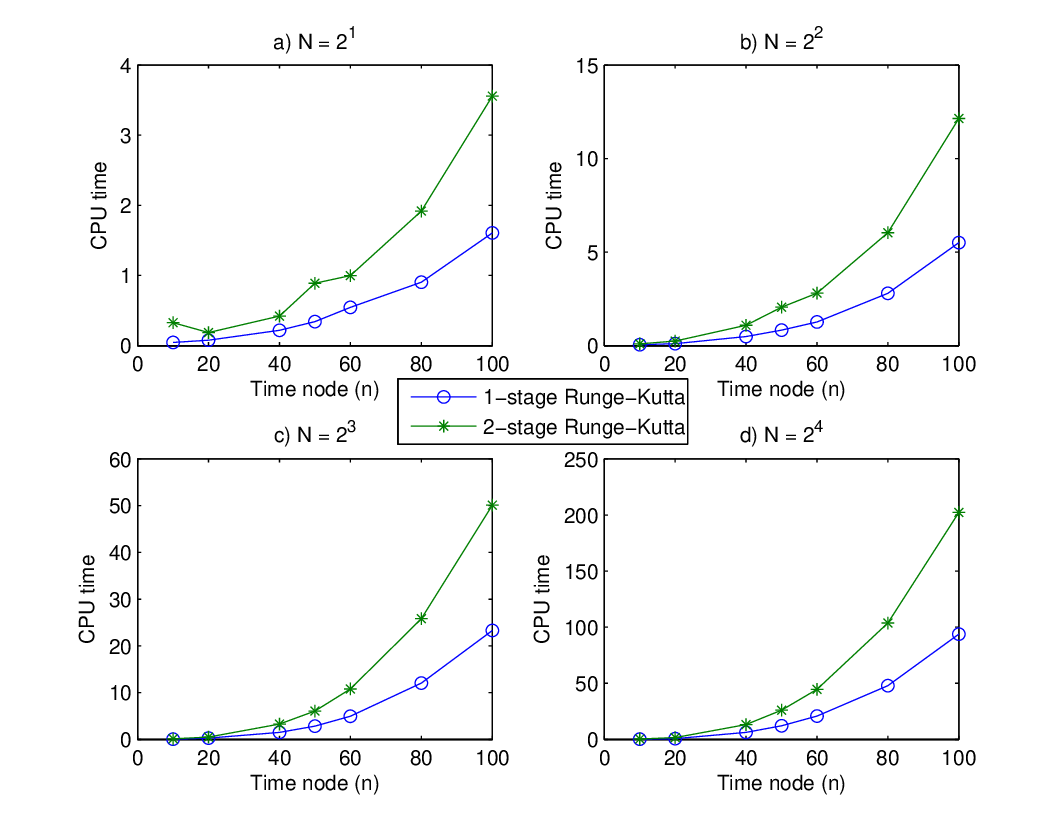} 
\label{fig:cpu}
\end{center}
\end{figure}

\section{Conclusion\label{sec:Conclusion}}

In order to solve the problem of extrapolation errors in the initial
implementation of the convolution method, we proposed an alternative
space discretization. The new tree-like space grid naturally allows
the usage of the FFT algorithm when computing the conditional expectation
included in the underlying explicit Euler scheme. The error analysis
shows that both the alternative grid and the (alternative) transform
suit the periodic nature of the FFT algorithm and help in producing
a stable, consistent and globally convergent numerical procedure for the FBSDE
approximate solutions. The second part of the paper deals with the implementation of the Fourier interpolation method with higher order time discretizations of FBSDEs. When the forward process increments admit conditional characteristic functions satisfying certain regularity conditions, it was shown that the method is also consistent, conditionally stable and globally convergent under Runge-Kutta schemes for FBSDEs.  A challenging area of research is the implementation of the methods of this paper in multidimensional and jump cases.

\vspace{5mm}
\noindent
\textbf{Acknowledgements}
\\ %
This research was supported by the Natural Sciences and Engineering Research Council of Canada (NSERC).

\bibliographystyle{abbrvnat}
\bibliography{conv-ii}

\begin{thebibliography}{36}
\providecommand{\natexlab}[1]{#1}
\providecommand{\url}[1]{\texttt{#1}}
\expandafter\ifx\csname urlstyle\endcsname\relax
  \providecommand{\doi}[1]{doi: #1}\else
  \providecommand{\doi}{doi: \begingroup \urlstyle{rm}\Url}\fi

\bibitem[Bally and Pages(2003)]{ballypages:2003}
V.~Bally and G.~Pages.
\newblock A quantization algorithm for solving multidimensional discrete-time
  optimal stopping problems.
\newblock \emph{Bernoulli}, 9\penalty0 (6):\penalty0 1003--1049, 2003.

\bibitem[Bally et~al.(2005)Bally, Pages, and Printems]{ballypagesp:2005}
V.~Bally, G.~Pages, and J.~Printems.
\newblock A quantization tree method for pricing and hedging multidimensional
  {A}merican options.
\newblock \emph{Math. Finance}, 15\penalty0 (1):\penalty0 119--168, 2005.

\bibitem[Beck et~al.(2019)Beck, {W}einan, and Jentzen]{beck}
C.~Beck, E.~{W}einan, and A.~Jentzen.
\newblock Machine learning approximation algorithms for high-dimensional fully
  nonlinear partial differential equations and second-order backward stochastic
  differential equations.
\newblock \emph{Journal of Nonlinear Science}, 29\penalty0 (4):\penalty0
  1563--1619, 2019.

\bibitem[Bender and Denk(2007)]{benderdenk:2007}
C.~Bender and R.~Denk.
\newblock A forward scheme for backward {SDE}s.
\newblock \emph{Stochastic Process. Appl.}, 117\penalty0 (12):\penalty0
  1793--1812, 2007.

\bibitem[Bender and Zhang(2008)]{benderzhang:2008}
C.~Bender and J.~Zhang.
\newblock Time discretization and {M}arkovian iteration for coupled {FBSDE}s.
\newblock \emph{Ann. Appl. Probab.}, 18\penalty0 (1):\penalty0 143--177, 2008.

\bibitem[Bouchard and Touzi(2004)]{bouchardtouzi:2004}
B.~Bouchard and N.~Touzi.
\newblock Discrete-time approximation and {M}onte-{C}arlo simulation of
  backward stochastic differential equations.
\newblock \emph{Stochastic Process. Appl.}, 111\penalty0 (2):\penalty0
  175--206, 2004.

\bibitem[Briand et~al.(2001)Briand, Delyon, and Memin]{briand:2001}
P.~Briand, B.~Delyon, and J.~Memin.
\newblock Donsker-type theorem for {BSDE}s.
\newblock \emph{Elect. Comm. in Probab.}, 6:\penalty0 1--14, 2001.

\bibitem[Chassagneux and Crisan(2014)]{Chasscrisan:2013}
J.~Chassagneux and D.~Crisan.
\newblock {R}unge-{K}utta schemes for {BSDE}s.
\newblock \emph{Ann. Appl. Probab.}, 24\penalty0 (2):\penalty0 679--720, 2014.

\bibitem[Chevance(1997)]{chevance:1997}
D.~Chevance.
\newblock Numerical methods for backward stochastic differential equations.
\newblock In L.~C.~G. Rogers and D.~Talay, editors, \emph{Numerical Methods in
  Finance}, Publ. Newton Inst., pages 232--244. Cambridge University Press,
  Cambridge, 1997.

\bibitem[Crisan and Manolarakis(2012)]{crisanm:2012}
D.~Crisan and K.~Manolarakis.
\newblock Solving backward stochastic differential equations using the cubature
  method: application to nonlinear pricing.
\newblock \emph{{SIAM} J. Financ. Math.}, 3\penalty0 (1):\penalty0 534--571,
  2012.

\bibitem[Crisan and Manolarakis(2014)]{crisanmol:2013}
D.~Crisan and K.~Manolarakis.
\newblock Second order discretization of backward {SDE}s and simulation with
  the cubature method.
\newblock \emph{Ann. Appl. Probab.}, 24\penalty0 (2):\penalty0 652--678, 2014.

\bibitem[Crisan et~al.(2010)Crisan, Manolarakis, and Touzi]{crisanmt:2010}
D.~Crisan, K.~Manolarakis, and N.~Touzi.
\newblock On the {M}onte {C}arlo simulation of {BSDE}s: An improvement on the
  {M}alliavin weights.
\newblock \emph{Stochastic Process. Appl.}, 120\penalty0 (7):\penalty0
  1133--1158, 2010.

\bibitem[Delarue and Menozzi(2006)]{delaruem:2006}
F.~Delarue and S.~Menozzi.
\newblock A forward-backward stochastic algorithm for quasi-linear {PDE}s.
\newblock \emph{Ann. Appl. Probab.}, 16\penalty0 (1):\penalty0 140--184, 2006.

\bibitem[{Douglas JR.} et~al.(1996){Douglas JR.}, Ma, and
  Protter]{douglas:1996}
J.~{Douglas JR.}, J.~Ma, and P.~Protter.
\newblock Numerical methods for forward-backward stochastic differential
  equations.
\newblock \emph{Ann. Appl. Probab.}, 6:\penalty0 940--968, 1996.

\bibitem[Duffie and Epstein(1992)]{duffieepstein:1992}
D.~Duffie and L.~G. Epstein.
\newblock Stochastic differential utility.
\newblock \emph{Econometrica}, 60\penalty0 (2):\penalty0 353--394, 1992.

\bibitem[{E}l {K}aroui et~al.(1997{\natexlab{a}}){E}l {K}aroui, Pardoux, and
  Quenez]{elkarouietal:1997}
N.~{E}l {K}aroui, E.~Pardoux, and M.~Quenez.
\newblock Reflected backward {SDE}s and {A}merican options.
\newblock In L.~C.~G. Rogers and D.~Talay, editors, \emph{Numerical Methods in
  Finance}, Publ. Newton Inst., pages 215--231. Cambridge University Press,
  Cambridge, 1997{\natexlab{a}}.

\bibitem[{E}l {K}aroui et~al.(1997{\natexlab{b}}){E}l {K}aroui, Peng, and
  Quenez]{elkaouri:1996}
N.~{E}l {K}aroui, S.~Peng, and M.-C. Quenez.
\newblock Backward stochastic differential equations in finance.
\newblock \emph{Math. Finance}, 7 (1):\penalty0 1--71, 1997{\natexlab{b}}.

\bibitem[Gobet et~al.(2005)Gobet, Lemor, and Warin]{gobet:2005}
E.~Gobet, J.-P. Lemor, and X.~Warin.
\newblock A regression-based {M}onte {C}arlo method to solve backward
  stochastic differential equations.
\newblock \emph{Ann. Appl. Probab.}, 15\penalty0 (3):\penalty0 2172--2202,
  2005.

\bibitem[Han and Long(2020)]{han20}
J.~Han and J.~Long.
\newblock Convergence of the deep {BSDE} method for coupled {FBSDE}s.
\newblock \emph{Probability, Uncertainty and Quantitative Risk}, 5\penalty0
  (1):\penalty0 5, 2020.

\bibitem[Huijskens et~al.(2016)Huijskens, Ruijter, and
  Oosterlee]{oosterlee:2016}
T.~Huijskens, M.~Ruijter, and C.~Oosterlee.
\newblock Efficient numerical {F}ourier methods for coupled forward-backward
  {SDEs}.
\newblock \emph{J. Comput. Appl. Math.}, 296:\penalty0 593--612, 2016.

\bibitem[Hyndman and {Oyono Ngou}(2017)]{hyndmanoyonongou:2013}
C.~B. Hyndman and P.~{Oyono Ngou}.
\newblock A convolution method for numerical solution of backward stochastic
  differential equations.
\newblock \emph{Methodol. Comput. Appl. Probab.}, 19:\penalty0 1--29, 2017.

\bibitem[Kloeden and Platen(1992)]{kloden:1992}
P.~Kloeden and E.~Platen.
\newblock \emph{Numerical Solution of Stochastic Differential Equations},
  volume~23 of \emph{Applications of Mathematics (New York)}.
\newblock Springer-Verlag, Berlin, 1992.

\bibitem[Lucia and Schwartz(2002)]{luciaschwartz:2002}
J.~Lucia and E.~Schwartz.
\newblock Electricity prices and power derivatives: Evidence from the nordic
  power exchange.
\newblock \emph{Rev. Derivatives Res.}, 5:\penalty0 5--50, 2002.

\bibitem[Ma et~al.(2002)Ma, Protter, Martin, and Torres]{maprottermt:2002}
J.~Ma, P.~Protter, J.~S. Martin, and S.~Torres.
\newblock Numerical method for backward stochastic differential equations.
\newblock \emph{Ann. Appl. Probab.}, 12 (1):\penalty0 302--316, 2002.

\bibitem[Ma et~al.(2008)Ma, Shen, and Zhao]{mashenzhao:2008}
J.~Ma, J.~Shen, and Y.~Zhao.
\newblock On numerical approximations of forward-backward stochastic
  differential equations.
\newblock \emph{{SIAM} J. Numer. Anal.}, 46\penalty0 (5):\penalty0 2636--2661,
  2008.

\bibitem[Oyono~Ngou(2014)]{poly:PhD}
P.~Oyono~Ngou.
\newblock \emph{Fourier methods for numerical solution of FBSDEs with
  applications in mathematical finance}.
\newblock PhD thesis, Concordia University, Montr\'eal, Canada, January 2014.

\bibitem[Pardoux and Peng(1992)]{pardouxpeng:1992}
E.~Pardoux and S.~Peng.
\newblock Backward stochastic differential equations and quasilinear parabolic
  partial differential equations.
\newblock In \emph{Stochastic partial differential equations and their
  applications ({C}harlotte, {NC}, 1991)}, volume 176 of \emph{Lec. Notes
  Control and Inform. Sci.}, pages 200--217. Springer, Berlin, 1992.

\bibitem[Peng and Xu(2011)]{pengxu:2011}
S.~Peng and M.~Xu.
\newblock Numerical algorithms for backward stochastic differential equations
  with 1-d {B}rownian motion: Convergence and simulations.
\newblock \emph{ESAIM Math. Model. Numer. Anal.}, 45:\penalty0 335--360, 2011.

\bibitem[Ruijter and Oosterlee(2015)]{RuijOosl:2013}
M.~Ruijter and C.~W. Oosterlee.
\newblock A {F}ourier-cosine method for an efficient computation of solutions
  to {BSDE}s.
\newblock \emph{{SIAM} J. Sci. Comput.}, 37\penalty0 (2):\penalty0 A859--A889,
  2015.

\bibitem[Ruijter and Oosterlee(2016)]{Ruijter20161}
M.~J. Ruijter and C.~W. Oosterlee.
\newblock Numerical {F}ourier method and second-order {T}aylor scheme for
  backward {SDE}s in finance.
\newblock \emph{Appl. Numer. Math.}, 103:\penalty0 1 -- 26, 2016.

\bibitem[Turkedjiev(2015)]{10.1214/EJP.v20-3022}
P.~Turkedjiev.
\newblock {Two algorithms for the discrete time approximation of Markovian
  backward stochastic differential equations under local conditions}.
\newblock \emph{Electronic Journal of Probability}, 20:\penalty0 1 -- 49, 2015.

\bibitem[Va{\v{s}}{\'{\i}}{\v{c}}ek(1977)]{vasicek:1977}
O.~Va{\v{s}}{\'{\i}}{\v{c}}ek.
\newblock An equilibrium characterization of the term structure.
\newblock \emph{J. Financ. Econ.}, 5\penalty0 (2), 1977.

\bibitem[{Weinan~E} et~al.(2017){Weinan~E}, Han, and Jentzen]{weinan17}
{Weinan~E}, J.~Han, and A.~Jentzen.
\newblock Deep learning-based numerical methods for high-dimensional parabolic
  partial differential equations and backward stochastic differential
  equations.
\newblock \emph{Communications in Mathematics and Statistics}, 5\penalty0
  (4):\penalty0 349--380, 2017.

\bibitem[{Weinan~E} et~al.(2019){Weinan~E}, Hutzenthaler, Jentzen, and
  Kruse]{weinan19}
{Weinan~E}, M.~Hutzenthaler, A.~Jentzen, and T.~Kruse.
\newblock On multilevel {P}icard numerical approximations for high-dimensional
  nonlinear parabolic partial differential equations and high-dimensional
  nonlinear backward stochastic differential equations.
\newblock \emph{Journal of Scientific Computing}, 79\penalty0 (3):\penalty0
  1534--1571, 2019.

\bibitem[{Weinan~E} et~al.(2022){Weinan~E}, Han, and Jentzen]{weinan20}
{Weinan~E}, J.~Han, and A.~Jentzen.
\newblock Algorithms for solving high dimensional {PDE}s: From nonlinear
  {M}onte {C}arlo to machine learning.
\newblock \emph{Nonlinearity}, pages 278--310, 2022.

\bibitem[Zhang(2004)]{zhang:2004}
J.~Zhang.
\newblock A numerical scheme for {BSDE}s.
\newblock \emph{Ann. Appl. Probab.}, 14:\penalty0 459--488, 2004.

\end{thebibliography}
\end{document}